\documentclass[12pt,onecolumn,final]{IEEEtran} 
\usepackage{amsthm}

\usepackage{graphics}                
\usepackage{url}
\usepackage{bm,booktabs}
\usepackage{amsfonts,amssymb}
\usepackage[ruled]{algorithm2e}
\usepackage{mathrsfs}
\usepackage{algorithmic}
\usepackage{subcaption}
\usepackage{graphicx,epsfig}
\usepackage{amsmath}
\usepackage[all]{xy}
\usepackage{caption,fixltx2e}
\usepackage[flushleft]{threeparttable}  
\providecommand{\keyword}[1]{\textbf{Keywords: } #1}

\def \A   { \mathscr{A} }
\def \L   { \mathscr{L} }
\def \Rs  { \Re^{m \times n} }
\def \IRq {IRucL-$q$}
\def \rank { \mathrm{rank} }
\def \tr  { \mathrm{tr} }  
\def \trk { \mathrm{tr_k} }

\def \st  { \mathrm{s.t.} }
\def \vc  { \mathrm{vec} }
\def \Diag { \mathrm{Diag} }
\DeclareMathOperator*{\argmin}{ \mathrm{arg \, min} }
\def \R {{\mathcal R}}

\sloppy

\thinmuskip = 0.5\thinmuskip \medmuskip = 0.5\medmuskip
\thickmuskip = 0.5\thickmuskip \arraycolsep = 0.3\arraycolsep

\newtheorem{theo}{Theorem}[section]

\newtheorem{lem}{Lemma}[section]

\newcommand{\comments}[1]{}

\graphicspath{{pic/}}

\begin{document}
\title{Transformed Schatten-1 Iterative Thresholding Algorithms for Low Rank Matrix Completion}


\author{Shuai~Zhang,
~Penghang~Yin,
and~Jack~Xin
\thanks{
			The work was partially supported by NSF grant DMS-1222507 and DMS-1522383.
			They are with the Department of Mathematics, University of California, Irvine, CA, 92697, USA. \
			E-mail: szhang3@uci.edu, penghany@uci.edu, jxin@math.uci.edu. \
		    Phone: (949)-824-5309.  Fax: (949)-824-7993.  }}



\maketitle



\begin{abstract}
We study a non-convex low-rank promoting penalty function, 
the transformed Schatten-1 (TS1), and its applications in matrix completion. 
The TS1 penalty, as a matrix quasi-norm defined on its singular values, interpolates the  
rank and the nuclear norm through a nonnegative parameter $a \in (0, +\infty)$. 
We consider the unconstrained TS1 regularized low-rank matrix recovery problem and develop a fixed 
point representation for its global minimizer. The TS1 thresholding functions are in closed
analytical form for all parameter values. The TS1 threshold values differ in subcritical (supercritical) 
parameter regime where the TS1 threshold functions are continuous (discontinuous). We propose TS1 
iterative thresholding algorithms and compare them with some state-of-the-art algorithms on matrix
completion test problems.  For problems with known rank, a fully adaptive TS1 iterative 
thresholding algorithm consistently performs the best under different conditions, 
where ground truth matrices are generated by multivariate Gaussian, $(0,1)$ uniform and 
Chi-square distributions. 
For problems with unknown rank, TS1 algorithms with an additional rank estimation procedure      
approach the level of  \IRq \ which is an iterative reweighted algorithm, 
non-convex in nature and best in performance.
\end{abstract}

\keyword{
Transformed Schatten-1 penalty,  fixed point representation,  closed form thresholding function, iterative thresholding algorithms, 
 matrix completion.}

\begin{AMS}
90C26, 90C46
\end{AMS}

\section{Introduction} \label{section: intro}
Low rank matrix completion problems arise in many applications such as 
collaborative filtering in recommender systems \cite{CR_09,Jannach_12}, minimum order system and low-dimensional Euclidean embedding in control 
theory \cite{Fazel_01,Fazel_03}, network localization \cite{JiYe_13}, 
and others \cite{Fazel-siamreview}. 
The mathematical problem is:
\begin{equation}
\begin{array}{l}
  \min \limits_{X \in \Re^{m \times n}} \rank(X)  \ \ \ 
  \st \ \ X \in \L,
\end{array}
\end{equation}
where $ \L $ is a convex set. In this paper, we are interested 
in methods for solving the affine rank minimization problem (ARMP)
\begin{equation} \label{mini: rank and map}
\begin{array}{l}
  \min \limits_{X \in \Re^{m \times n}} \rank(X) \ \ \
  \st \ \ \A (X) = b \ \textit{in} \ \Re^{p},
\end{array}
\end{equation}
where linear transformation $\A: \Re^{m \times n} \rightarrow \Re^p$ 
and vector $b \in \Re^p$ are given. 
The matrix completion problem 
\begin{equation} \label{mini: rank and comple}
\begin{array}{l} 
  \min \limits_{X \in \Re^{m \times n}} \rank (X) \ \ \ 
  \st \ \ X_{i,j} = M_{i,j}, \ \ (i,j) \in \Omega
\end{array}
\end{equation}
is a special case of (\ref{mini: rank and map}), where $X$ and $M$ are both $m \times n$ 
matrices and $\Omega$ is a subset of index pairs $\{ (i,j) \}$. 
\medskip 

The optimization problems above are known to be NP-hard. 
Many alternative penalties have been utilized as proxies for finding 
low rank solutions in both the  
constrained and unconstrained settings: 
\begin{equation} \label{mini: general alternative constrain}
\begin{array}{l}
  \min \limits_{X \in \Re^{m \times n}} F(X) \ \ \ 
  \st \ \ \A (X) = b
\end{array}
\end{equation}
and 
\begin{equation} \label{mini: general regular}
  \min \limits_{X \in \Re^{m \times n}} { \frac{1}{2}\|\A(X) - b\|_2^2 + \lambda F(X) } .
\end{equation}

The penalty function $F(\cdot)$ is defined on singular values of matrix $X$, 
typically $F(X) = \sum \limits_{i} f(\sigma_i) $, where $\sigma_i$ is 
the $i$-th largest singular value of $X$ arranged in descending order. 
The Schatten $p$-norm (nuclear norm at $p=1$) results 
when $f(x) = x^p$, $p \in [0,1]$. 
At $p=0$ ($p=2$), $F$ is the rank (Frobenius norm). 
Recovering rank under suitable conditions for $p \in (0,1]$ 
has been extensively studied in theories and algorithms 
\cite{cai2010singular,candes2010power,CR_09,Kesh_10,
IRucLq,lu2014iterative,FPCA,Fazel,LMaFit}. 
Non-convex penalty based methods have shown better performance 
on hard problems \cite{IRucLq,Fazel}. 
There is also a novel method to solve the constrained problem 
(\ref{mini: general alternative constrain}), 
from the perspective of gauge dual \cite{friedlander2014gauge, friedlander2015low}.

Recently, a class of $\ell_1$ based non-convex penalty, the transformed 
$\ell_1$ (TL1), has been found effective and robust for compressed sensing 
problems \cite{Threshold-TL1,DCATL1}. 
TL1 interpolates $\ell_0$ and $\ell_1$, similar to $\ell_p$ quasi-norm 
($p \in (0,1)$). In the entire range of interpolation parameter, 
TL1 enjoys closed form iterative thresholding function, 
which is available for $\ell_p$ only at some specific values, 
like $p=0,1,1/2,2/3$, see \cite{hard-threshold-blumensath2012accelerated,
xian-2-3rds-cao2013fast-image,Daub_10,xian-half}.
This feature allows TL1 to perform fast and robust sparse minimization 
in a much wider range than $l_p$ quasi-norm. Moreover, the TL1 penalty boasts 
unbiasedness and Lipschitz continuity besides sparsity \cite{fan2001variable,transformed-l1}.  

It is the goal of this paper to extend TL1 penalty to TS1 (transformed Schatten-1)
for low rank matrix completion and compare it with state of the art methods 
in the literature. 

The rest of the paper is organized as follows. 
In section 2, we present the transformed Schatten-$1$ function (TS1),  
the TS1 regularized minimization problems, and a derivation of thresholding representation of the global minimum. 
In section 3, we propose two thresholding algorithms (TS1-s1 and TS1-s2) 
based on a fixed point equation of the global minimum.  
In section 4, we compare TS1 algorithms with some state-of-the-art 
algorithms through numerical experiments in low rank matrix recovery and 
image inpainting. Concluding remarks are in section 5.


\subsection{Notation}  \label{section: notation}
Here we set the notations for this paper. 
Two kinds of inner products are used in the following sections, 
one is between matrices and one is a bilinear operation for vectors: 
\begin{equation*}
\begin{array}{l}
(x,y) = \sum \limits_i  x_i y_i  \  \  \text{for vectors} \ x, y ; \\
\langle X, Y \rangle = \tr(Y^TX) = \sum \limits_{i,j} X_{i,j} Y_{i,j}  \  \  
\text{for matrices} \ X, Y.
\end{array}
\end{equation*}

Assume matrix $ X \in \Re^{m \times n}$ has $r$ positive singular values 
$\sigma_1 \geq \sigma_2 \geq ... \geq \sigma_r > 0$. Let us introduce 
some common matrix norms or quasi-norms as,  
\begin{itemize}
\item Nuclear norm:  $\|X\|_* = \sum \limits_{i = 1}^{ r } \sigma_i $;
\item Schatten $p$ quasi-norm: $\|X\|_p = ( \sum \limits_{i = 1}^{ r } \sigma_i^p )^{1/p} $, 
for $ p \in (0,1)$; 
\item Frobenius norm: $\|X\|_F = (\sum \limits_{i = 1}^{ r } \sigma_i^2 )^{\frac{1}{2}} $. Also 
$\|X\|_F^2 = \langle X, X \rangle = \sum \limits_{i,j} X_{i,j}^2$. 
\item Ky Fan $k$-norm: $\|X\|_{Fk} = \sum \limits_{i = 1}^{ k } \sigma_i $, for $1 \leq k \leq r$; 
\item Induced $L^2$ norm: $ \| X \|_{L^2} = \max \limits_{\|v\|_2 = 1} \|Xv\|_2 = \sigma_1$.
\end{itemize}
Define function $\vc(\cdot)$ to unfold one matrix columnwise into a vector.
So it is clearly that 
$ \|\vc(X)\|_2 = \|X\|_F$, 
where the left hand side norm is vector's $\ell_2$ norm.  

Define the shrinkage identity $k$ matrix $I_k^s \in \Rs $ as following,
\begin{equation} \label{func: shrinkage identity k} 
\left\lbrace
\begin{array}{ll} 
\vspace{1mm}
I_k^s(i,i) = 1,  & \textit{the first k diagonal elements};  \\
I_k^s(i,j) = 0,  & \textit{others}. \\
\end{array}
\right. 
\end{equation}
Operator $\trk(\cdot)$ is defined as the first $k$ partial trace of a matrix,
\begin{equation} \label{func: partial trace}
\trk(X) = \sum \limits_{i = 1}^{k}  X_{i,i}.
\end{equation}

The following matrix functions will be used in the proof of next section, and 
we want to write them out first here for reference:
\begin{equation} \label{nota: C B}
\begin{array}{l}  
\vspace{1mm}
C_{\lambda}(X) = \frac{1}{2}\|\A (X) - b \|_2^2 + \lambda T(X) ;\\  
\vspace{1mm}
C_{\lambda,\mu}(X,Z)  
	 = \mu \left\{\  C_{\lambda}(X) - \frac{1}{2}\| \A(X) - \A(Z) \|_2^2 \ \right\} 
           + \frac{1}{2}\| X - Z\|_F^2    \\ 
\vspace{1mm}
\hspace{1.7cm}	 = \mu \lambda T(X) +  \frac{\mu}{2}   \| b \|_2^2  
- \frac{\mu}{2}  \| \A(Z) \|_2^2  - \mu (\A(X), b - \A(Z)) 
           + \frac{1}{2}\| X - Z\|_F^2  ;  \\ 
B_{\mu} (Z)  = Z + \mu \A^*(b - \A(Z)).
\end{array}
\end{equation}

\section{TS1 minimization and thresholding representation}
First, let us introduce Transformed Schatten-1 penalty function(TS1) based on the singular values 
of a matrix:
\begin{equation} \label{equ: TS1}
T(X) = \sum \limits_{i = 1}^{ \rank(X) } \rho_a(\sigma_i),
\end{equation}
where $\rho_a(\cdot)$ is a linear-to-linear rational function with parameter $a \in (0,\infty)$ 
\cite{Threshold-TL1,DCATL1},
\begin{equation}  \label{equ: TL1}
\rho_a(|x|) =  \frac{(a+1)|x|}{a+|x|}.
\end{equation}

With the change of parameter $a$,  TL1 interpolates $l_0$ and $l_1$ norms:   
\[ 
	\lim_{a \to 0^{+}} \rho_a(x) = I_{\{x \neq 0\}} , \  \
 	\lim_{a \to +\infty} \rho_a(x) = |x|. 
\]
In Fig$.\;$\ref{gp: level lines}, level lines of TL1 on the plane are shown   
at small and large values of parameter $a$, resembling  
those of $l_1$ (at $a = 100$), $l_{1/2}$ (at $a=1$), and 
$l_0$ (at $a=0.01$). 

\begin{figure}
\def\arraystretch{3}
\begin{tabular}{l r}
\begin{minipage}[t]{0.45\linewidth}
\includegraphics[scale=0.35]{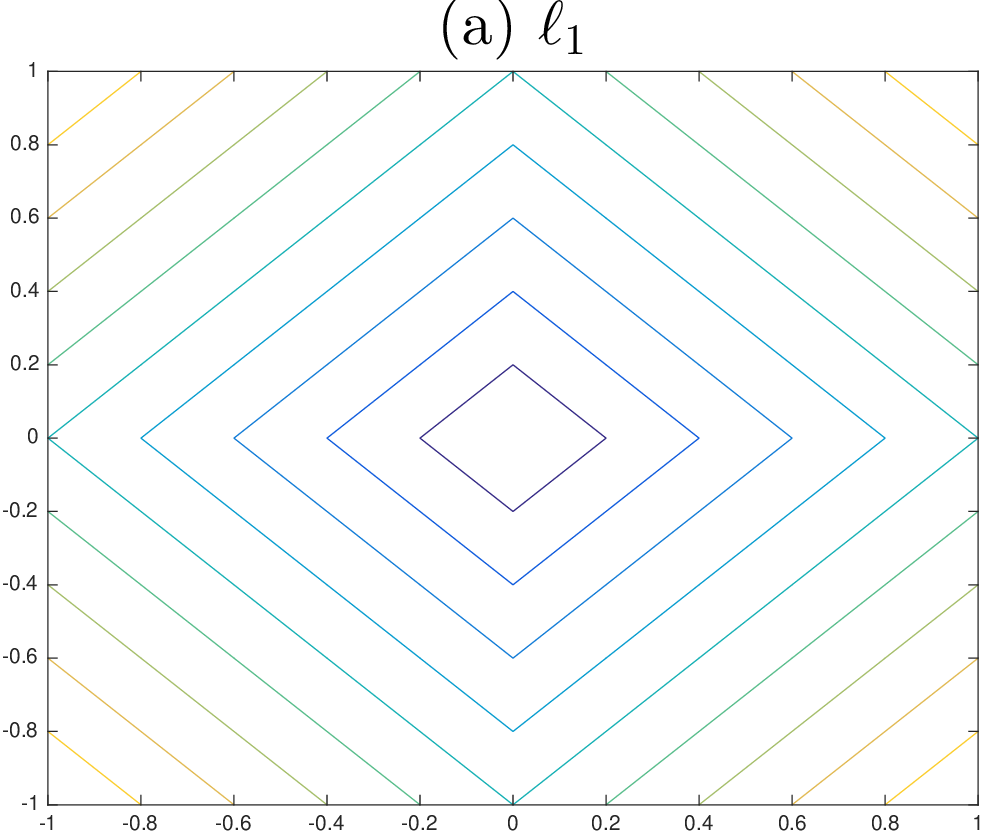} 
\end{minipage} & 
\begin{minipage}[t]{0.45\linewidth}
\includegraphics[scale= 0.35]{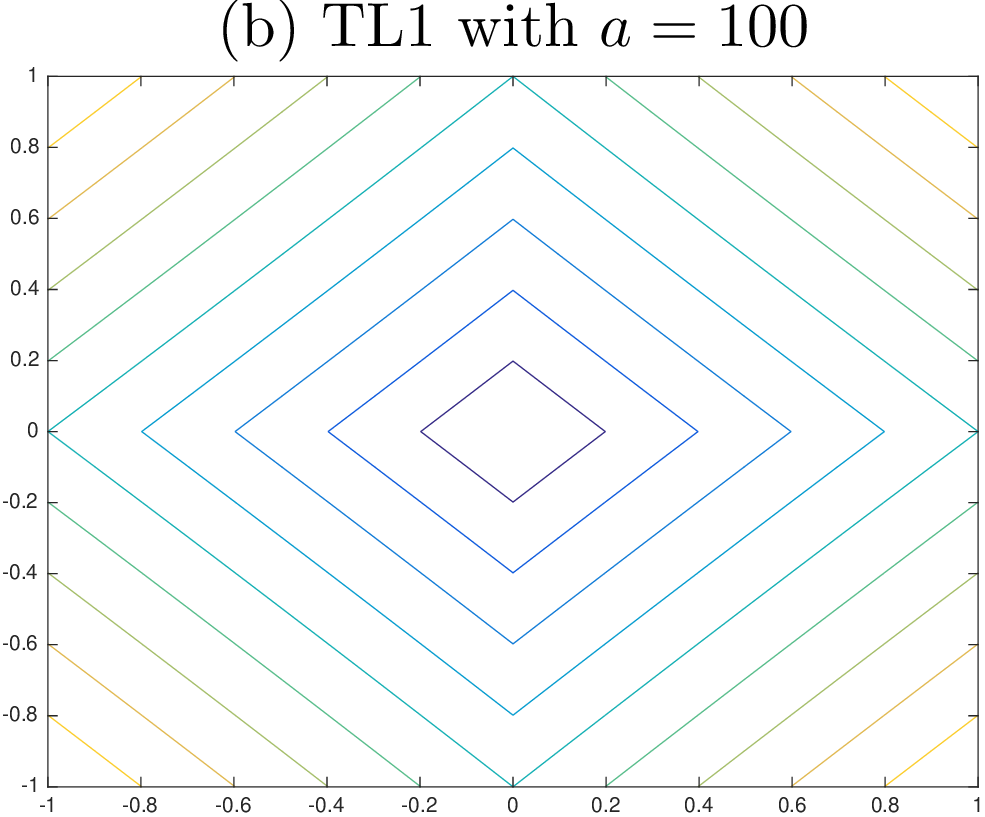} 
\end{minipage} \\
\begin{minipage}[t]{0.45\linewidth}
\includegraphics[scale=0.35]{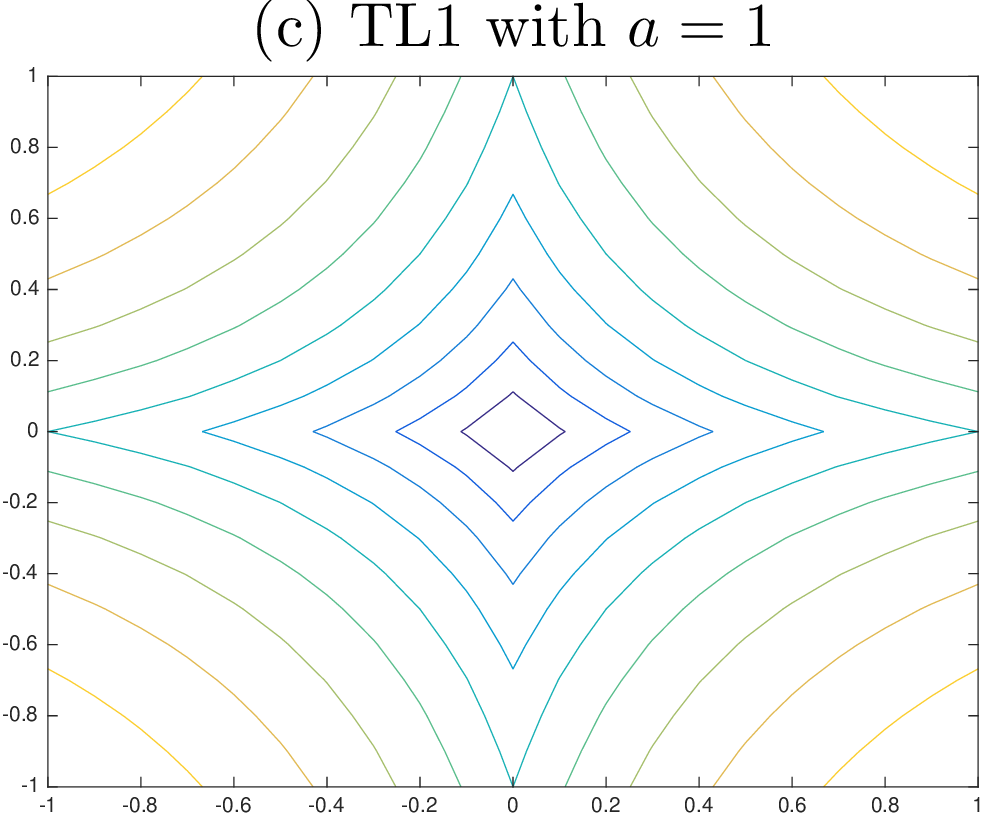} 
\end{minipage} & 
\begin{minipage}[t]{0.45\linewidth}
\includegraphics[scale= 0.35]{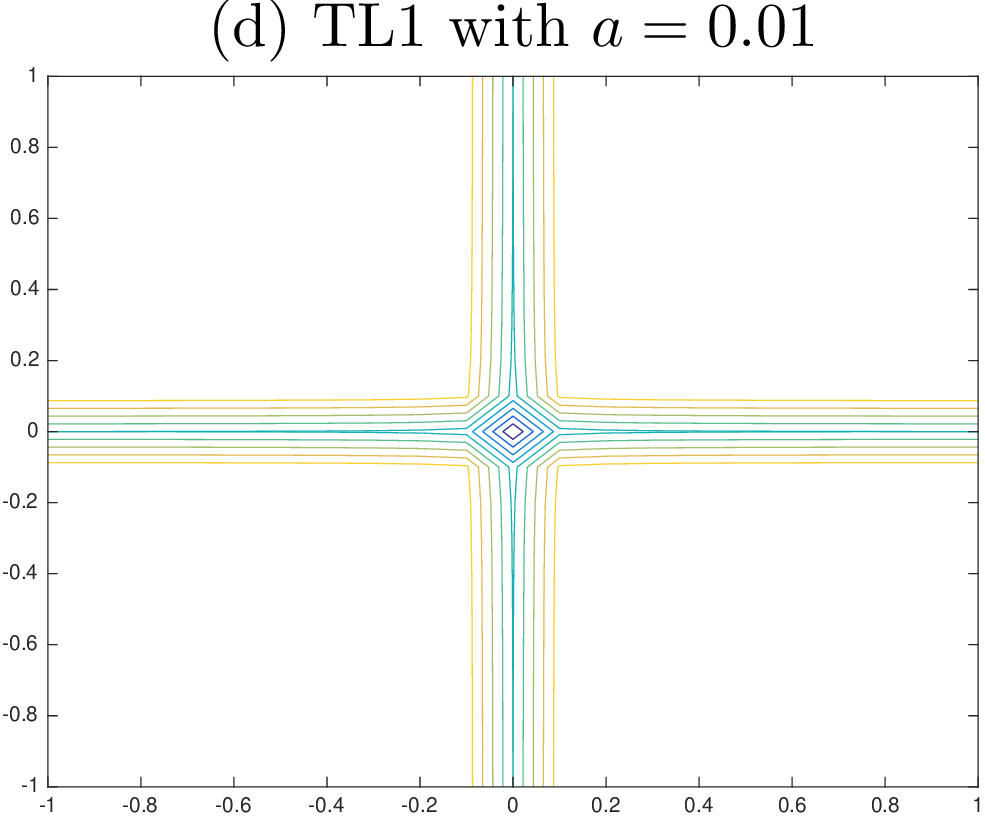} 
\end{minipage}
\end{tabular} 
\caption{Level lines of TL1 with different parameters: $a=100$ (figure b), $a=1$ (figure c), 
$a=0.01$ (figure d). For large parameter a, the graph looks almost the same as $l_1$ (figure a). 
While for small value of a, it tends to the axis.}
\label{gp: level lines}
\end{figure}

We shall focus on TS1 regularized problem
\begin{equation} \label{mini: ts1 regular}
  \min \limits_{X \in \Re^{m \times n}} { \frac{1}{2}\|\A(X) - b\|_2^2 + \lambda T(X) },
\end{equation}
where the linear transform $\A: \Re^{m \times n} \rightarrow \Re^p$ can be determined by $p$ 
given matrices $A_1,...,A_p \in \Re^{m \times n}$, that is, 
$\A(X) = \left( \langle A_1, X \rangle,...,\langle A_p, X \rangle \right)^T$.

\subsection{Overview of TL1 minimization}
To set the stage for the discussion of the TS1 regularized problem (\ref{mini: ts1 regular}), 
we review the following results on one-dimensional TL1 optimization \cite{Threshold-TL1}.

Let us consider the unconstrained TL1 regularized problem: 
\begin{equation} \label{mini: TL1 regular}
\min \limits_{x \in \Re^{n}} \frac{1}{2} \| Ax -y \|_2^2 + \lambda P_a(x),
\end{equation}
where matrix $A \in \Re^{m \times n}$, vector $ y \in \Re^{m}$ are given, 
$P_a(x) = \sum \limits_{i} \rho_a(|x_i|) $ and function $\rho_a(\cdot)$ 
is as in (\ref{equ: TL1}). 

In this subsection of TL1 minimization, we want to overwrite operator $B_{\mu}(\cdot)$ 
over vector $x$, instead of matrix field as before in (\ref{nota: C B}),
\begin{equation} \label{func: R&B}
B_{\mu}(x) = x + \mu A^T(y- Ax).
\end{equation}

In the following theorem (\ref{them: TL1 g formula}), we prove that there exists 
a closed form expression for proximal operator $prox_{\lambda \rho_a}$ on 
univariate TL1 regularization problem, where 
$
	prox_{\lambda \rho_a}(x) = \argmin \limits_{y \in \Re} 
		\frac{1}{2}(y-x)^2 + \lambda \rho_a(y).
$

Proximal operator of a convex function usually intends to solve a small convex 
regularization problem, which often admits closed-form formula or an efficient
specialized numerical methods. However, for non-convex functions, like $l_p$
with $p \in (0.1)$, their related proximal operators do not have closed form 
solutions in general. There are many iterative algorithms to approximate 
optimal solution. But they need more computing time and sometimes only 
converge to local optimal or stationary point. 
In this subsection, we prove that for TL1 function, there indeed exists a 
closed-formed formula for its optimal solution.

Different with other thresholding operators, TL1 has 2 threshold value 
formulas depending on regular parameter $\lambda$ and TL1 parameter `a'. 
We present them here with same notation as \cite{Threshold-TL1}.
\begin{equation} \label{form: threshold parameters}
\left\lbrace \begin{array}{l}
   \vspace{1mm}
   t^*_2 = \lambda \frac{a+1}{a} 
   \ \ \textit{(sub-critical parameter)} \\
   t^*_3 = \sqrt{2\lambda (a+1)} - \frac{a}{2}
   \ \ \textit{(super-critical parameter)}.
\end{array} \right.
\end{equation}
The inequality $t^*_3 \leq t^*_2$ holds and the equality is realized if and only if 
$\lambda = \frac{a^2}{2(a+1)}$, see \cite{Threshold-TL1}. 

Let $\mathrm{sgn}(\cdot)$ be the standard signum function with $\mathrm{sgn}(0)=0$, 
and 
\begin{equation}  \label{func: h formula,  part of g}
h_{\lambda}(x) = \mathrm{sgn}(x) \left\{ \frac{2}{3}(a+|x|)\, \cos\left (\frac{\varphi(x)}{3}\right ) 
                                 -\frac{2a}{3} + \frac{|x|}{3} \right\}
\end{equation}
with $ \varphi(x) = \arccos( 1 - \frac{27\lambda a(a+1)}{2(a+|x|)^3} ) $. 
In general, $| h_{\lambda}(x) | \leq |x|$, see \cite{Threshold-TL1}.

\begin{theo}(\cite{Threshold-TL1}) \label{them: TL1 g formula} 
The optimal solution of 
$
	y^* = \argmin \limits_{y \in \Re} 
		\{  \frac{1}{2}(y-x)^2 + \lambda \rho_a(|y|)  \}
$ 
is a thresholding function of the form: 
\begin{equation} 
y^* = \left\{
\begin{array}{ll}
0 ,             & |x| \leq t \\
h_{\lambda}(x), & |x| > t
\end{array} \right.
\end{equation}
where $h_{\lambda}(\cdot)$ is defined in (\ref{func: h formula,  part of g}), 
and the threshold parameter $t$ depends on $\lambda$ as follows:  
\begin{enumerate}
\item  if $\lambda \leq \frac{a^2}{2(a+1)}$ (sub-critical and critical),
$$
	t = t^*_2 = \lambda \frac{a+1}{a};
$$
\item  if $\lambda > \frac{a^2}{2(a+1)}$ (super-critical),
$$
	t = t^*_3 =  \sqrt{2\lambda (a+1)} - \frac{a}{2}. 
$$
\end{enumerate}

\end{theo}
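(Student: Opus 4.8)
The plan is to compute this scalar proximal map directly and read off the threshold from a boundary-versus-interior comparison. By the invariance of the objective under the simultaneous sign change $(x,y)\mapsto(-x,-y)$, it suffices to treat $x\ge 0$, in which case the minimizer satisfies $y^*\ge 0$. On the open half-line $y>0$ set $g(y)=\tfrac12(y-x)^2+\lambda\rho_a(y)=\tfrac12(y-x)^2+\lambda\frac{(a+1)y}{a+y}$, whose derivative is
\begin{equation*}
g'(y)=(y-x)+\lambda\,\frac{a(a+1)}{(a+y)^2}.
\end{equation*}
The global minimizer is then one of two competing candidates: the boundary point $y=0$, with value $g(0)=\tfrac12 x^2$, and a positive stationary point of $g$.

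Next I would locate the interior stationary points. Clearing denominators in $g'(y)=0$ gives the cubic $(y-x)(a+y)^2+\lambda a(a+1)=0$. Solving this depressed cubic by the trigonometric (Vi\`ete) method produces precisely the closed form $h_\lambda(x)$ of (\ref{func: h formula,  part of g}), the angle $\varphi(x)=\arccos\!\big(1-\tfrac{27\lambda a(a+1)}{2(a+|x|)^3}\big)$ being the rescaled discriminant term of the cubic; the bound $|h_\lambda(x)|\le|x|$ cited above certifies that this is the branch relevant to minimization. This settles the form of the nonzero candidate.

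It remains to decide, as a function of $x$, whether the minimizer is $0$ or $h_\lambda(x)$, and this is where the two regimes appear. The point $y=0$ is a local minimum exactly when $g'(0^+)=-x+\lambda\frac{a+1}{a}\ge 0$, i.e. when $x\le t^*_2$. For $\lambda\le\frac{a^2}{2(a+1)}$ (sub-critical and critical), the cubic carries no spurious interior local minimum able to undercut $y=0$, so $y=0$ is global precisely while it is local; the threshold is $t=t^*_2$ and the map is continuous. For $\lambda>\frac{a^2}{2(a+1)}$ (super-critical), $g$ can possess an interior local minimum while $y=0$ is still a local minimum, so the global minimizer is decided by comparing values: the threshold is the $x$ at which $g(y^*)=g(0)=\tfrac12 x^2$. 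Imposing this equal-value condition together with $g'(y^*)=0$ and eliminating $y^*$ yields $(a+y^*)^2=2\lambda(a+1)$, whence $y^*=\sqrt{2\lambda(a+1)}-a$ and then $x=\tfrac12 y^*+\lambda\frac{a+1}{a+y^*}=\sqrt{2\lambda(a+1)}-\frac a2=t^*_3$; here the map jumps, as in hard thresholding.

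Finally, the boundary case $\lambda=\frac{a^2}{2(a+1)}$ is exactly where $t^*_1=t^*_3=t^*_2$ (the equality statement following (\ref{form: threshold parameters})), so the two formulas glue together continuously at criticality and the description is consistent across all $\lambda$. I expect the main obstacle to be the global comparison in the super-critical regime: one must track the number and location of the real roots of the cubic as $x$ and $\lambda$ vary, verify that two genuine local minima coexist on the relevant interval, and confirm that the crossover is governed exactly by the equal-value condition rather than merely by loss of local minimality of $y=0$. Once that root-counting is in hand, the remaining work is the elimination sketched above and the routine trigonometric solution of the cubic.
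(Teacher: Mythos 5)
Your proposal is correct and follows essentially the same route as the source of this result: the paper itself states Theorem \ref{them: TL1 g formula} without proof, importing it from \cite{Threshold-TL1}, and your plan reconstructs that reference's argument faithfully — reduction to $x \geq 0$ by sign symmetry, the stationarity cubic $(y-x)(a+y)^2 + \lambda a(a+1) = 0$ solved by Vi\`ete's trigonometric method (which, after the shift $z = y+a$, $w = z - (a+x)/3$, does produce exactly $h_\lambda$ with $\varphi(x)=\arccos\bigl(1-\tfrac{27\lambda a(a+1)}{2(a+|x|)^3}\bigr)$), the first-order condition $g'(0^+) = -x + \lambda\tfrac{a+1}{a} \geq 0$ giving $t^*_2$, and the equal-value elimination $g(y^*)=g(0)$, $g'(y^*)=0$, which correctly yields $(a+y^*)^2 = 2\lambda(a+1)$ and $x = t^*_3$, with $y^* = \sqrt{2\lambda(a+1)}-a > 0$ precisely in the supercritical regime. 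The two details you flag as the main obstacles are in fact routine: in the subcritical regime $g''(y) = 1 - \tfrac{2\lambda a(a+1)}{(a+y)^3}$ is increasing in $y$ with $g''(0)\geq 0$ exactly when $\lambda \leq \tfrac{a^2}{2(a+1)}$, so $g$ is convex on $[0,\infty)$ and local minimality of $0$ is equivalent to global; and in the supercritical regime the crossover is unique because $\tfrac{d}{dx}\bigl[g(h_\lambda(x);x) - g(0;x)\bigr] = -h_\lambda(x) < 0$ along the interior stationary branch, so the equal-value condition is crossed exactly once, at $x = t^*_3$.
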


According to the above theorem, we introduce thresholding operator 
$g_{\lambda,a}(\cdot)$ in $\Re$,
\begin{equation}  \label{func: g formula}
g_{\lambda,a}(w) = \left\{
\begin{array}{ll}
0,              & \ \text{if} \ |w| \leq t; \\
h_{\lambda}(w), & \ \text{if} \ |w| > t,
\end{array}
\right.
\end{equation}
where $t$ is the thresholding value in Theorem \ref{them: TL1 g formula} 
and $h_{\lambda}(\cdot)$ in (\ref{func: h formula,  part of g}).

In \cite{Threshold-TL1}, the authors proved that when  $\lambda < \frac{a^2}{2(a+1)}$, 
the TL1 threshold function is continuous, same as soft-thresholding function
\cite{soft-threshold-lp-daubechies2004iterative, Don_95}. 
While if $\lambda > \frac{a^2}{2(a+1)}$, the TL1 thresholding function has a 
jump discontinuity at threshold value, similar to half-thresholding function \cite{xian-half}.  
For different threshold scheme, it is believed that continuous formula is more
stable, while discontinuous formula separates nonzero and trivial coefficients
more efficiently and sometimes converges faster.

We have the following representation theorem for TL1 regularized 
problem (\ref{mini: TL1 regular}).
\medskip

\begin{theo}(\cite{Threshold-TL1})
If $x^* = (x^*_1,x_2^*,...,x^*_n)^T$ is a TL1 regularized solution 
(\ref{mini: TL1 regular}) 
with $a$ and $\lambda$ being positive constants, 
and $0 < \mu < \|A\|^{-2}$, then letting
$
	t = t^*_2 I_{\left\{ \lambda \mu \leq \frac{a^2}{2(a+1)} \right\}} 
		+ t^*_3 I_{\left\{ \lambda \mu > \frac{a^2}{2(a+1)} \right\}} 
$, 
the optimal solution satisfies the fixed point equation:
\begin{equation}
x_i^* = g_{\lambda \mu, a }([B_{\mu}(x^*)]_i) \ \ \ \forall \ i = 1,...,n.
\end{equation}
\end{theo}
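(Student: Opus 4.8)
The plan is to prove this by a majorization--minimization (surrogate) argument: the coupling through the quadratic $\tfrac{1}{2}\|Ax-y\|_2^2$ is what prevents a direct coordinatewise analysis, so I would replace the objective $C_\lambda(x)=\tfrac{1}{2}\|Ax-y\|_2^2+\lambda P_a(x)$ by a separable surrogate that touches it along the diagonal and dominates it elsewhere, and then invoke the scalar Theorem \ref{them: TL1 g formula} one coordinate at a time. Concretely, I would work with the vector analogue of the surrogate in (\ref{nota: C B}),
$$
C_{\lambda,\mu}(x,z)=\mu\Big\{\tfrac{1}{2}\|Ax-y\|_2^2+\lambda P_a(x)-\tfrac{1}{2}\|Ax-Az\|_2^2\Big\}+\tfrac{1}{2}\|x-z\|_2^2 .
$$

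First I would establish two structural facts. On the diagonal, $C_{\lambda,\mu}(z,z)=\mu\,C_\lambda(z)$, so the surrogate agrees with $\mu C_\lambda$ at $x=z$. For the domination,
$$
C_{\lambda,\mu}(x,z)-\mu\,C_\lambda(x)=\tfrac{1}{2}(x-z)^T\big(I-\mu A^TA\big)(x-z)\ \geq\ 0 ,
$$
where the inequality is exactly where the step-size hypothesis $0<\mu<\|A\|^{-2}$ is consumed, since it makes $I-\mu A^TA$ positive semidefinite. Next I would expand and complete the square in $x$: the two $\|Ax\|_2^2$ terms cancel, leaving
$$
C_{\lambda,\mu}(x,z)=\mu\lambda\,P_a(x)+\tfrac{1}{2}\big\|x-B_{\mu}(z)\big\|_2^2+r(z)
=\sum_i\Big[\mu\lambda\,\rho_a(|x_i|)+\tfrac{1}{2}\big(x_i-[B_{\mu}(z)]_i\big)^2\Big]+r(z),
$$
with $r(z)$ collecting all terms independent of $x$. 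The surrogate is now fully separable, and each scalar subproblem is precisely the one solved by Theorem \ref{them: TL1 g formula} with $\lambda$ replaced by $\mu\lambda$ and data $[B_{\mu}(z)]_i$; hence $x\mapsto C_{\lambda,\mu}(x,z)$ is minimized at the vector with entries $g_{\mu\lambda,a}([B_{\mu}(z)]_i)$.

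I would then close the argument with a sandwich of inequalities. Setting $z=x^*$ and writing $\hat{x}_i=g_{\mu\lambda,a}([B_{\mu}(x^*)]_i)$ for the coordinatewise minimizer of $C_{\lambda,\mu}(\cdot,x^*)$, the diagonal identity, the minimality of $\hat{x}$, the domination bound, and the global optimality of $x^*$ chain together as
$$
\mu\,C_\lambda(x^*)=C_{\lambda,\mu}(x^*,x^*)\ \geq\ C_{\lambda,\mu}(\hat{x},x^*)\ \geq\ \mu\,C_\lambda(\hat{x})\ \geq\ \mu\,C_\lambda(x^*).
$$
Every inequality must therefore be an equality; in particular $x^*$ itself minimizes $x\mapsto C_{\lambda,\mu}(x,x^*)$, and matching it against $\hat{x}$ yields $x^*=\hat{x}$, which is the asserted fixed-point equation.

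The hard part will not be the surrogate construction but the last uniqueness step. In the super-critical regime $\lambda\mu>\tfrac{a^2}{2(a+1)}$ the scalar thresholding map is discontinuous, so if some coordinate satisfies $|[B_{\mu}(x^*)]_i|=t$ exactly, the corresponding scalar subproblem has two minimizers of equal value ($0$ and the branch $h_{\lambda\mu}$), and the equality of objective values alone does not immediately force $x^*=\hat{x}$ there. I would handle this by observing that off the threshold the scalar minimizer is strict, so $x^*$ and $\hat{x}$ already agree on every non-boundary coordinate, and the remaining boundary coordinates are resolved by the tie-breaking convention built into $g_{\mu\lambda,a}$ (which selects $0$ at the threshold); in the sub-critical and critical cases the map is continuous, the two candidate values coincide, and no ambiguity arises.
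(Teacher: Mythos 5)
Your proposal is essentially the paper's own argument: the paper quotes this vector statement from \cite{Threshold-TL1}, but its proof of the matrix analogue in Section 2 proceeds exactly as you do --- completing the square to show $C_{\lambda,\mu}(x,z)=\tfrac{1}{2}\|x-B_{\mu}(z)\|_2^2+\lambda\mu P_a(x)+r(z)$ (so the surrogate separates and Theorem \ref{them: TL1 g formula} applies coordinatewise), proving the domination $C_{\lambda,\mu}(x,x^*)\geq \mu C_{\lambda}(x)$ from $0<\mu<\|A\|^{-2}$, and then running the same sandwich of inequalities. The one soft spot is your final tie-breaking paragraph, which as written is circular: the convention that $g_{\mu\lambda,a}$ selects $0$ at the threshold defines $\hat{x}$, but it places no constraint on $x^*$, so ``resolved by the convention'' does not show $x^*_i=g_{\mu\lambda,a}([B_{\mu}(x^*)]_i)$ at a boundary coordinate. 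The repair is already contained in your own domination computation: since $\mu<\|A\|^{-2}$ makes $I-\mu A^TA$ strictly positive definite, the equality $C_{\lambda,\mu}(\hat{x},x^*)=\mu C_{\lambda}(\hat{x})$ forced by your chain yields $\tfrac{1}{2}(\hat{x}-x^*)^T\bigl(I-\mu A^TA\bigr)(\hat{x}-x^*)=0$ and hence $\hat{x}=x^*$ outright, boundary coordinates included; applying the same strict inequality to the alternative scalar minimizer shows in addition that no coordinate of $B_{\mu}(x^*)$ can sit exactly at the discontinuous (super-critical) threshold when $x^*$ is a global minimizer, so the ambiguity you worried about never actually occurs. (The paper itself passes over this degenerate case silently, so your proof, once patched this way, is if anything more complete.)
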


In the following, we will extend this result to TS1 low rank matrix completion
and propose 2 thresholding algorithms based on it.

\subsection{TS1 thresholding representation theory}
Here we assume $m \leq n$. 
For a matrix $X \in \Rs$ with rank equal to $r$, its singular values vector
$\sigma = (\sigma_1,...,\sigma_m)$
is arranged as
$$
\sigma_1 \geq \sigma_2 \geq ... \geq \sigma_r > 0 = \sigma_{r+1} = ... = \sigma_m.
$$
The singular value decomposition (SVD) is $X = U D V^T$, 
where $U = (U_{i,j})_{m\times m}$ and $V = (V_{i,j})_{n\times n}$ are unitary matrices, 
with $D = Diag(\sigma) \in \Rs$ diagonal.

In \cite{KFan_inequality}, Ky Fan proved the dominance theorem and derive the following 
Ky Fan k-norm inequality. 
\medskip

\begin{lem} (Ky Fan $k$-norm inequality) \label{lem: Ky Fan inequ}
For a matrix $X \in \Rs$ with SVD: $X = U\, D\, V^T$, where diagonal elements of $D$ are
arranged in decreasing order, we have:
\begin{equation*}
\langle X, I_k^s \rangle \leq \langle D, I_k^s \rangle,
\end{equation*}
that is, $\trk(X) \leq \trk(D) = \|X\|_{Fk} $, $\forall k = 1,2,...,m$.
The inequalities become equalities if and only if $X=D$. 
Here matrix $I_k^s$ and operator $\trk(\cdot)$ are defined in section \ref{section: notation}.
\end{lem}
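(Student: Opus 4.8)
The plan is to first reduce the statement to an inequality between the ordered partial sums of the diagonal of $X$ and of its singular values. Because $I_k^s$ is the diagonal $0$--$1$ matrix defined in (\ref{func: shrinkage identity k}), the definitions of $\langle\cdot,\cdot\rangle$ and $\trk(\cdot)$ give $\langle X,I_k^s\rangle=\trk(X)=\sum_{i=1}^k X_{ii}$ and $\langle D,I_k^s\rangle=\trk(D)=\sum_{i=1}^k\sigma_i=\|X\|_{Fk}$. Thus everything amounts to proving $\sum_{i=1}^k X_{ii}\le\sum_{i=1}^k\sigma_i$ for each $k=1,\dots,m$, and then identifying when all these inequalities are simultaneously tight. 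Substituting the SVD $X=UDV^T$ I would write $X_{ii}=\sum_{l=1}^m\sigma_l U_{il}V_{il}$, so that $\trk(X)=\sum_{l=1}^m\sigma_l\,w_l$ with $w_l:=\sum_{i=1}^k U_{il}V_{il}$; the target $\sum_{i=1}^k\sigma_i$ is the same expression with $w_l$ replaced by $\tilde w_l=1_{\{l\le k\}}$.

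The heart of the proof is the partial-sum estimate $W_j:=\sum_{l=1}^j w_l\le\min(j,k)$ for every $j$. To see this I would group $w_l$ by the index $i$ and apply Cauchy--Schwarz then AM--GM to get $W_j\le\tfrac12\sum_{i=1}^k\sum_{l=1}^j\bigl(U_{il}^2+V_{il}^2\bigr)$; each double sum is at most $\min(j,k)$, since summing $U_{il}^2$ over $l$ (respectively over $i$) uses that the rows (respectively columns) of the orthogonal matrix $U$ are unit vectors, and the same for $V$. Given this bound, summation by parts against the nonincreasing sequence $\sigma_1\ge\cdots\ge\sigma_m\ge0$ produces $\sum_{i=1}^k\sigma_i-\trk(X)=\sigma_m(k-W_m)+\sum_{l=1}^{m-1}(\sigma_l-\sigma_{l+1})(\min(l,k)-W_l)\ge0$, every factor being nonnegative. (Equivalently, writing $I_k^s=E_kF_k^T$ with $E_k,F_k$ the first $k$ columns of the $m\times m$ and $n\times n$ identities gives $\trk(X)=\tr(E_k^TXF_k)$, and the bound is Ky Fan's maximum principle; it also follows from von Neumann's trace inequality applied to $(I_k^s)^TX$, since $I_k^s$ has singular values equal to $k$ ones.)

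For the equality characterization, the direction $X=D\Rightarrow$ equality is immediate. For the converse I would telescope the equalities at consecutive indices: $X_{kk}=\trk(X)-\sum_{i=1}^{k-1}X_{ii}=\sum_{i=1}^k\sigma_i-\sum_{i=1}^{k-1}\sigma_i=\sigma_k$ for each $k$, so the diagonal of $X$ equals $(\sigma_1,\dots,\sigma_m)$. The subtle point is that this diagonal matching alone does not yet give $X=D$; I would finish with the orthogonal invariance of the Frobenius norm. Since $X=UDV^T$, we have $\|X\|_F^2=\|D\|_F^2=\sum_{i=1}^m\sigma_i^2=\sum_{i=1}^m X_{ii}^2$, whereas $\|X\|_F^2=\sum_{i,j}X_{ij}^2\ge\sum_{i=1}^m X_{ii}^2$. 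Equality forces all off-diagonal entries of $X$ to vanish, and as the diagonal already equals that of $D$, we conclude $X=D$.

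I expect the main obstacle to be the partial-sum bound $W_j\le\min(j,k)$, which is the single step that must combine the orthonormality of \emph{both} $U$ and $V$ and is precisely what makes the Abel summation monotone; the rest of the inequality is bookkeeping. In the equality part the only real subtlety is recognizing that the Frobenius-norm identity, rather than the diagonal matching, is what kills the off-diagonal entries.
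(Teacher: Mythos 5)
Your proposal is correct, and while it starts from the same weight decomposition as the paper ($\trk(X)=\sum_{l}\sigma_l w_l$ with $w_l=\sum_{i=1}^k U_{il}V_{il}$, the paper's $w_i^{(k)}$), it diverges in both key steps. For the inequality, the paper proves the two absolute bounds $|w_i^{(k)}|\leq 1$ and $\sum_{i=1}^m |w_i^{(k)}|\leq k$ (Cauchy--Schwarz on columns, then on rows) and then simply asserts that a decreasing nonnegative sequence weighted by such weights is dominated by its top-$k$ sum; you instead prove the partial-sum bound $W_j\leq\min(j,k)$ on the signed weights and make the final monotonicity step explicit via Abel summation, so your write-up actually supplies the rearrangement detail the paper leaves implicit (at the cost of working with signed rather than absolute weights, which your AM--GM step handles correctly since rows and columns of $U$ and $V$ are unit vectors). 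The difference is more pronounced in the equality characterization: the paper runs an induction on $k$ showing that $U_k$ and $V_k$ are forced to be signed standard basis vectors, so each rank-one piece $\sigma_k U_k V_k^T$ is diagonal; you instead telescope $\trk(X)=\trk(D)$ to get $X_{kk}=\sigma_k$, then use unitary invariance of the Frobenius norm, $\sum_{i,j}X_{ij}^2=\|D\|_F^2=\sum_i X_{ii}^2$, to kill the off-diagonal entries. Your route is shorter and, notably, more robust: the paper's inductive claims (e.g., that $\mathrm{tr}_1(X)=\mathrm{tr}_1(D)$ forces $w_1^{(1)}=1$ and $w_i^{(1)}=0$ for $i\geq 2$) tacitly require the singular values to be distinct and positive, and need extra care when $\sigma_1=\sigma_2$ or when trailing singular values vanish, whereas your Frobenius argument is insensitive to such degeneracies; what the paper's longer induction buys in exchange is explicit structural information about the singular vector matrices $U$ and $V$, which your argument bypasses entirely.
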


\medskip
Another proof of this inequality without using dominance theorem is available.
We leave it in the appendix for readers' convenience, making the paper self-contained. 
\medskip

\begin{theo} \label{them: TS1 repres}
For any matrix $Y \in \Rs$, which admits a singular value decomposition: 
$Y = U\, \Diag(\sigma) \, V^T$, where $ \sigma = (\sigma_1,...,\sigma_m)$. 
A global minimizer of  
$\min \limits_{X \in \Re^{m \times n}} { \frac{1}{2}\| X - Y \|_F^2 + \lambda T(X) }$ 
is: 
\begin{equation}
X^s = G_{\lambda,a}(Y) = U \Diag(g_{\lambda,a}(\sigma))V^T, 
\end{equation}
where $g_{\lambda,a}(\cdot)$ is defined in (\ref{func: g formula}) and applied entrywise to $\sigma$.
\end{theo}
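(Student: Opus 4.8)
The plan is to reduce the matrix problem to $m$ decoupled scalar problems governed by Theorem~\ref{them: TL1 g formula}, using unitary invariance together with the Ky Fan inequality of Lemma~\ref{lem: Ky Fan inequ}. First I would exploit unitary invariance to assume $Y$ is diagonal: for arbitrary $X$ set $\tilde X = U^T X V$, so that as $X$ ranges over $\Rs$ so does $\tilde X$, while $\|X - Y\|_F = \|\tilde X - \Diag(\sigma)\|_F$ (the Frobenius norm is invariant under left/right multiplication by the unitaries $U,V$) and $T(X) = T(\tilde X)$ (the singular values of $X$ and $\tilde X$ coincide). Hence it suffices to minimize $f(X) = \frac12\|X - D\|_F^2 + \lambda T(X)$ with $D = \Diag(\sigma)$, and then conjugate the minimizer back by $U(\cdot)V^T$.

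Next I would expand $f$ and isolate the only coupling term. Writing $\sigma_i(X)$ for the singular values of $X$,
\begin{equation*}
f(X) = \tfrac12 \sum_i \sigma_i(X)^2 - \langle X, D\rangle + \tfrac12\sum_i \sigma_i^2 + \lambda\sum_i \rho_a(\sigma_i(X)),
\end{equation*}
where $\langle X, D\rangle = \sum_i \sigma_i X_{i,i}$ because $D$ is diagonal. The key step is the trace bound $\langle X, D\rangle \le \sum_i \sigma_i\,\sigma_i(X)$, which I would derive from Lemma~\ref{lem: Ky Fan inequ} by Abel summation: with $s_k = \sum_{i\le k} X_{i,i}$ and $S_k = \sum_{i\le k}\sigma_i(X) = \|X\|_{Fk}$, the lemma gives $s_k \le S_k$ for every $k$, and summation by parts against the non-increasing weights $\sigma_i$ (using $\sigma_i - \sigma_{i+1}\ge0$ and $\sigma_m\ge0$) yields the claim. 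Substituting, the objective is bounded below by a separable sum:
\begin{equation*}
f(X) \ge \sum_i \Big[ \tfrac12(\sigma_i(X) - \sigma_i)^2 + \lambda\rho_a(\sigma_i(X)) \Big] \ge \sum_i \min_{s\ge0}\Big[\tfrac12(s-\sigma_i)^2 + \lambda\rho_a(s)\Big].
\end{equation*}

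Then I would invoke Theorem~\ref{them: TL1 g formula}: for $x = \sigma_i \ge 0$ the scalar minimizer of $\frac12(s-x)^2 + \lambda\rho_a(|s|)$ over $s\in\Re$ is $g_{\lambda,a}(\sigma_i)\ge0$, so the constrained ($s\ge0$) minimum is attained at the same point and equals the $i$-th term of the lower bound. Finally I would verify this lower bound is achieved by $X^s = \Diag(g_{\lambda,a}(\sigma))$: since $g_{\lambda,a}$ is non-decreasing on $[0,\infty)$, the entries $g_{\lambda,a}(\sigma_1)\ge\cdots\ge g_{\lambda,a}(\sigma_m)\ge0$ are precisely the ordered singular values of this diagonal matrix, whence $\rho_a(\sigma_i(X^s)) = \rho_a(g_{\lambda,a}(\sigma_i))$; and being diagonal, $X^s$ realizes equality in the Ky Fan step, $\langle X^s, D\rangle = \sum_i \sigma_i\, g_{\lambda,a}(\sigma_i)$. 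Undoing the reduction gives $X^s = U\,\Diag(g_{\lambda,a}(\sigma))\,V^T$.

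I expect the main obstacle to be the trace inequality $\langle X,D\rangle \le \sum_i\sigma_i\,\sigma_i(X)$: Lemma~\ref{lem: Ky Fan inequ} only controls diagonal partial sums against top-$k$ singular-value sums, so upgrading these $m$ separate inequalities to the weighted statement requires the Abel-summation argument above and genuinely uses that $\sigma$ is sorted in decreasing order. A secondary point is the achievability bookkeeping --- confirming that the componentwise minimizers remain sorted (via monotonicity of $g_{\lambda,a}$) so that $\Diag(g_{\lambda,a}(\sigma))$ is an admissible matrix at which every inequality in the chain is simultaneously tight.
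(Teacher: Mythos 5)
Your proposal is correct and follows essentially the same route as the paper: the unitary-invariance reduction to $D=\Diag(\sigma)$, the scalar TL1 theorem for the decoupled problems, and your Abel-summation step are all there --- indeed the paper performs exactly that summation by parts by writing $\Diag(\sigma)=\sum_i \triangledown\sigma_i\, I_i^s$ with $\triangledown\sigma_i=\sigma_i-\sigma_{i+1}\ge 0$ and applying Lemma~\ref{lem: Ky Fan inequ} to each $\langle X, I_i^s\rangle$. The only cosmetic difference is that the paper first proves the argmin may be restricted to diagonal matrices (so no sortedness check is needed, the diagonal objective being fully separable), whereas you go directly to a separable lower bound and therefore need the (true, and easily justified) monotonicity of $g_{\lambda,a}$ for achievability.
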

\medskip

\begin{proof}
First due to the unitary invariance property of Frobenius norm and $Y = U \Diag(\sigma)V^T$, we have 
\begin{equation*}
\frac{1}{2}\| X - Y \|_F^2 + \lambda T(X) = \frac{1}{2}\| U^T X V - \Diag(\sigma) \|_F^2 + \lambda T(U^T X V).
\end{equation*}
So
\begin{equation} \label{equ: Xs bridge to D}
\begin{array}{ll}
\vspace{2mm}
X^s & = \argmin \limits_{X \in \Rs} \frac{1}{2}\| X - Y \|_F^2 + \lambda T(X) \\
    & = U  \left\{ \argmin \limits_{X \in \Rs} \frac{1}{2}\| X - Diag(\sigma) \|_F^2 
        + \lambda T(X) \right\} V^T. 
\end{array}    
\end{equation}
Next we want to show: 
\begin{equation} \label{equ: X to D}
\begin{array}{l}
\vspace{2mm}
\argmin \limits_{X \in \Rs} \frac{1}{2}\| X - \Diag(\sigma) \|_F^2 + \lambda T(X) \\
= \argmin_{ \{ D \in \Rs \ \text{is diagonal} \} } \frac{1}{2}\| D - \Diag(\sigma) \|_F^2 + \lambda T(D)
\end{array}    
\end{equation}

For any $X \in \Rs$, suppose it admits SVD: $X =  U_x Diag(\sigma_{x}) V_x^T$. 
Denote 
$$
D_x = \Diag(\sigma_{x}) \ \text{and} \ D_y = \Diag(\sigma).
$$
We can rewrite diagonal matrix $D_y$ as $D_y = \sum \limits_{i}^{m} \triangledown \sigma_i I_i^s$, 
where $\triangledown \sigma_i = \sigma_i - \sigma_{i+1} \geq 0$ for $i = 1,2,...,m-1$, and 
$\triangledown \sigma_m = \sigma_m$.  
So simply, $\sum \limits_{i = k}^{m} \triangledown \sigma_i = \sigma_k$. 
Note the shrinkaged identity matrix $I^s_i$ is defined in section \ref{section: notation}. 
\begin{equation*}
\begin{array}{ll}
\langle X, D_y \rangle 
& = \langle X, \sum \limits_{i}^{m} \triangledown \sigma_i I_i^s \rangle 
   = \sum \limits_{i}^{m} \langle X, \triangledown \sigma_i I_i^s \rangle \\
& \leq \sum \limits_{i}^{m} \langle D_x , \triangledown \sigma_i I_i^s \rangle 
   = \langle D_x, D_y \rangle,                     
\end{array}
\end{equation*}
where we used Lemma \ref{lem: Ky Fan inequ} for the inequality. 
The equality holds if and only if $X = D_x$. 

Thus we have 
\begin{equation*}
\begin{array}{ll}
\| X - D_y \|_F^2 
& = \| X \|_F^2 + \| D_y \|_F^2 - 2 \langle X, D_y \rangle \\
& \geq \| D_x \|_F^2 + \| D_y \|_F^2 - 2 \langle D_x , D_y \rangle = \|D_x - D_y\|_F^2.
\end{array}    
\end{equation*}
Also due to $T(X) = T(D_x)$,
\[
\frac{1}{2}\| X - D_y \|_F^2 + \lambda T(X) 
\geq \frac{1}{2}\| D_x - D_y \|_F^2 + \lambda T(D_x). 
\]
Only when $X = D_x$ is a diagonal matrix, the above will become equality. 
So we finish the proof of equation (\ref{equ: X to D}).

Denote a diagonal matrix $D  \in \Rs$ as $D = \Diag(d)$. Then:
\[ \frac{1}{2}\| D - \Diag(\sigma) \|_F^2 + \lambda T(D) 
= \sum \limits_{i=1}^{m} \left\{ \frac{1}{2}\| d_i - \sigma_i \|_2^2 + \lambda \rho_a(|d_i|) \right\}
\]
By Theorem \ref{them: TL1 g formula}, we have 
$ g_{\lambda,a}(\sigma_i) = arg \min \limits_{d} \{ \ \frac{1}{2}\| d - \sigma_i \|_2^2 
+ \lambda \rho_a(|d|) \ \} \geq 0$. 
It follows that
\begin{equation} 
\begin{array}{l}
\vspace{1mm}
\argmin_{ \{ D \in \Rs \text{and D is diagonal} \} } \frac{1}{2}\| D - \Diag(\sigma) \|_F^2 
  + \lambda \ T(D) \\ \vspace{1mm}
= \argmin \limits_{X \in \Rs} \frac{1}{2}\| X - \Diag(\sigma) \|_F^2 + \lambda \ T(X) \\ 
= \Diag(g_{\lambda,a}(\sigma)).
\end{array}    
\end{equation}

In view of (\ref{equ: Xs bridge to D}), the matrix 
$X^s = U Diag(g_{\lambda,a}(\sigma))V^T$ is a global minimizer, 
which will be denoted as $G_{\lambda,a}(Y)$.
The proof is complete.
\end{proof}
\medskip

\begin{lem}  \label{lemma: C-(X,Z) }
For any fixed $\lambda > 0$, $\mu >0$ and matrix $Z \in \Re^{m \times n}$,
let $X^s = G_{\lambda \mu,a}(B_{\mu}(Z))$,
then for any matrix $X \in \Re^{m \times n}$, 
$$
C_{\lambda,\mu}(X^s,Z) \leq C_{\lambda,\mu}(X,Z),
$$
which means $X^s$ is a global minimizer of $C_{\lambda,\mu}(X,Z)$.
Here the matrix function $C_{\lambda,\mu}(X,Z)$ is defined in (\ref{nota: C B}) 
of section \ref{section: notation}.
\end{lem}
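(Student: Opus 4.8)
The plan is to rewrite $C_{\lambda,\mu}(X,Z)$, viewed as a function of $X$ with $Z$ held fixed, into the exact form handled by Theorem \ref{them: TS1 repres}. Starting from the expanded expression in (\ref{nota: C B}), I would isolate all terms depending on $X$ other than the penalty $\mu\lambda T(X)$; these are precisely $-\mu(\A(X),\, b-\A(Z))$ and $\frac{1}{2}\|X-Z\|_F^2$. Expanding the latter gives $\frac{1}{2}\|X\|_F^2-\langle X,Z\rangle+\frac{1}{2}\|Z\|_F^2$, and the decisive step is to convert the vector inner product on $\Re^p$ into a matrix inner product through the adjoint relation $(\A(X),\, v)=\langle X,\, \A^*(v)\rangle$, so that $-\mu(\A(X),\, b-\A(Z))=-\langle X,\, \mu\A^*(b-\A(Z))\rangle$.

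Combining these, the $X$-linear and $X$-quadratic contributions assemble into
\[
\tfrac{1}{2}\|X\|_F^2-\big\langle X,\; Z+\mu\A^*(b-\A(Z))\big\rangle
=\tfrac{1}{2}\|X\|_F^2-\langle X,\, B_{\mu}(Z)\rangle,
\]
where I recognize $B_{\mu}(Z)$ as defined in (\ref{nota: C B}). Completing the square then yields
\[
C_{\lambda,\mu}(X,Z)=\tfrac{1}{2}\|X-B_{\mu}(Z)\|_F^2+\mu\lambda\, T(X)+R(Z),
\]
where $R(Z)=\frac{\mu}{2}\|b\|_2^2-\frac{\mu}{2}\|\A(Z)\|_2^2+\frac{1}{2}\|Z\|_F^2-\frac{1}{2}\|B_{\mu}(Z)\|_F^2$ collects the terms that do not involve $X$.

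Since $R(Z)$ is constant in $X$, minimizing $C_{\lambda,\mu}(\cdot,Z)$ over $\Rs$ is equivalent to minimizing $\frac{1}{2}\|X-B_{\mu}(Z)\|_F^2+(\mu\lambda)\,T(X)$. I would then invoke Theorem \ref{them: TS1 repres} with target matrix $Y=B_{\mu}(Z)$ and regularization parameter $\mu\lambda$ in place of $\lambda$; its conclusion identifies the global minimizer as $G_{\mu\lambda,a}(B_{\mu}(Z))$, which is exactly the $X^s$ of the statement. Hence $C_{\lambda,\mu}(X^s,Z)\leq C_{\lambda,\mu}(X,Z)$ for every $X\in\Rs$, as claimed.

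There is no substantive obstacle: the argument is a completion-of-squares reduction followed by a direct appeal to the single-matrix thresholding result of Theorem \ref{them: TS1 repres}. The only point requiring genuine care is the adjoint identity $(\A(X),\, v)=\langle X,\, \A^*(v)\rangle$, since it is what makes the operator $B_{\mu}(Z)$ emerge and aligns the reduced problem with the hypotheses of that theorem; I would also verify the sign bookkeeping when expanding $\frac{1}{2}\|\A(X)-\A(Z)\|_2^2$ and the Frobenius squares so that the cross terms combine correctly into the single quadratic $\frac{1}{2}\|X-B_{\mu}(Z)\|_F^2$.
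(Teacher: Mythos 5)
Your proposal is correct and follows essentially the same route as the paper's proof: both expand $C_{\lambda,\mu}(X,Z)$, use the adjoint identity to turn $-\mu(\A(X),\,b-\A(Z))$ into $-\langle X,\,\mu\A^*(b-\A(Z))\rangle$, complete the square to obtain $\frac{1}{2}\|X-B_{\mu}(Z)\|_F^2+\lambda\mu\,T(X)$ plus an $X$-independent remainder, and then invoke Theorem \ref{them: TS1 repres} with $Y=B_{\mu}(Z)$ and parameter $\lambda\mu$. No gaps; your sign bookkeeping and the identification of $B_{\mu}(Z)$ match the paper's computation exactly.
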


\begin{proof}
First, we will rewrite the formula of $C_{\lambda,\mu}(X,Z)$. 
Note that $\A(X)$ and $\A(Z)$ are vectors in space $\Re^p$. 
Thus in the formula of $C_{\lambda,\mu}(X,Z)$, there exist norms and inner products  
for both matrices and vectors. 
By definition, 
\begin{equation}
\begin{array}{lll}
C_{\lambda,\mu}(X,Z) 
& = & \frac{1}{2} \|X\|_F^2 - \langle X,Z \rangle + \frac{1}{2}\|Z\|_F^2 
      + \lambda \, \mu\,  T(X) + \frac{\mu}{2} \, \|b\|^2_2 \\  \vspace{1mm}
& &   - \mu \, (\A(X), b - \A(Z) ) -\frac{\mu}{2}\, \|\A(Z)\|^2_2 \\                

& = & \frac{1}{2} \|X\|_F^2 + \frac{1}{2}\|Z\|_F^2  
      + \frac{\mu}{2} \|b\|^2_2 -\frac{\mu}{2}\|\A(Z)\|^2_2 \\  \vspace{1mm}
& &   + \lambda \, \mu \, T(X) - \langle \ X, Z + \mu \A^* (b - \A(Z)) \ \rangle \\

& = & \frac{1}{2}\| X - B_{\mu}(Z) \|_F^2 + \lambda \, \mu \,  T(X) \\
& &   -\frac{1}{2}\|B_{\mu}(Z)\|_F^2 + \frac{1}{2}\|Z\|_F^2  
      + \frac{\mu}{2} \|b\|^2_2 -\frac{\mu}{2}\|\A(Z)\|^2_2 
\end{array}
\end{equation}
Thus if we fix matrix $Z$,
\begin{equation}
\begin{array}{l}
\argmin \limits_{X \in \Re^{m \times n}} C_{\lambda,\mu}(X,Z)  
= \argmin \limits_{X \in \Re^{m \times n}} \frac{1}{2}\| X - B_{\mu}(Z) \|_F^2 + \lambda \mu T(X)
\end{array}
\end{equation}
Then by Theorem \ref{them: TS1 repres}, $X^s$ is a global minimizer. 
\end{proof}
\medskip

\begin{theo} \label{lemma: C(X) - C(X<Z)} 
For fixed parameters, $\lambda>0$ and $0 < \mu < \|\A\|^{-2}_2$. 
If $X^*$ is a global minimizer for problem $C_{\lambda}(X)$, 
then $X^*$ is also a global minimizer for problem $\min \limits_{X \in \Rs} C_{\lambda,\mu}(X,X^*)$, 
that is 
\begin{equation*}
C_{\lambda,\mu}(X^*,X^*) \leq C_{\lambda,\mu}(X,X^*), \ \ \ \forall X \in \Re^{m \times n}.
\end{equation*}
\end{theo}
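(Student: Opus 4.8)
The plan is to reduce the claimed inequality to a comparison between $C_{\lambda}(X^*)$ and $C_{\lambda}(X)$, plus a nonnegative ``proximal'' remainder that is controlled precisely by the step-size restriction $\mu < \|\A\|_2^{-2}$. No structural property of the TS1 penalty $T$ will be needed; global minimality of $X^*$ will enter in exactly one place.

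First I would evaluate both sides of the target inequality at the prescribed arguments using the definition of $C_{\lambda,\mu}$ in (\ref{nota: C B}). Since $\A(X^*) - \A(X^*) = 0$ and $\|X^* - X^*\|_F = 0$, the definition collapses to $C_{\lambda,\mu}(X^*,X^*) = \mu\, C_{\lambda}(X^*)$. For a general $X$, using the linearity of $\A$ to write $\A(X) - \A(X^*) = \A(X - X^*)$, the same definition gives
\[
C_{\lambda,\mu}(X,X^*) = \mu\, C_{\lambda}(X) - \frac{\mu}{2}\,\|\A(X - X^*)\|_2^2 + \frac{1}{2}\,\|X - X^*\|_F^2 .
\]
Subtracting, the assertion $C_{\lambda,\mu}(X^*,X^*) \leq C_{\lambda,\mu}(X,X^*)$ becomes equivalent to the nonnegativity of
\[
\mu\bigl(C_{\lambda}(X) - C_{\lambda}(X^*)\bigr) + \frac{1}{2}\Bigl(\|X - X^*\|_F^2 - \mu\,\|\A(X - X^*)\|_2^2\Bigr).
\]

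Next I would argue that each of the two pieces is separately nonnegative. The first term is nonnegative because $X^*$ is, by hypothesis, a global minimizer of $C_{\lambda}$, so $C_{\lambda}(X^*) \leq C_{\lambda}(X)$, and $\mu > 0$. For the second term I would invoke the operator-norm bound $\|\A(W)\|_2 \leq \|\A\|_2\,\|W\|_F$, valid for every $W \in \Rs$, with $\|\A\|_2$ the norm of $\A$ induced by the Frobenius norm on the domain and the Euclidean norm on $\Re^p$; applying it to $W = X - X^*$ yields $\mu\,\|\A(X - X^*)\|_2^2 \leq \mu\,\|\A\|_2^2\,\|X - X^*\|_F^2$. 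The step-size condition $\mu < \|\A\|_2^{-2}$ forces $\mu\,\|\A\|_2^2 < 1$, so the parenthesized difference is nonnegative. Adding the two nonnegative contributions establishes the inequality and hence that $X^*$ globally minimizes $C_{\lambda,\mu}(\,\cdot\,,X^*)$.

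I do not anticipate any serious obstacle; the argument is essentially a one-line surrogate (majorization) estimate. The only point requiring genuine care is the bookkeeping between the matrix (Frobenius) norm and the vector ($\ell_2$) norm, together with the precise interpretation of $\|\A\|_2$, so that the contraction estimate $\mu\,\|\A\|_2^2 < 1$ is applied to the correct quantity. This result, combined with Lemma \ref{lemma: C-(X,Z)}, is exactly what is needed to justify the fixed-point characterization of the TS1 regularized minimizer.
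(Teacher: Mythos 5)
Your proposal is correct and follows essentially the same route as the paper: both decompose $C_{\lambda,\mu}(X,X^*)$ into $\mu\,C_{\lambda}(X)$ plus the remainder $\frac{1}{2}\bigl(\|X-X^*\|_F^2-\mu\|\A(X-X^*)\|_2^2\bigr)$, bound the remainder below by zero via $\|\A(W)\|_2\leq\|\A\|_2\|W\|_F$ together with $\mu\|\A\|_2^2<1$, and invoke global minimality of $X^*$ for $C_{\lambda}$ exactly once. The only difference is presentational—you subtract and verify nonnegativity of the difference, while the paper chains the two inequalities $C_{\lambda,\mu}(X,X^*)\geq\mu C_{\lambda}(X)\geq\mu C_{\lambda}(X^*)=C_{\lambda,\mu}(X^*,X^*)$—which is the same argument rearranged.
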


\begin{proof}
\begin{equation}
\begin{array}{rl}
C_{\lambda,\mu}(X,X^*) 
= & \mu \{ \frac{1}{2}\|\A(X) - b\|^2_2 + \lambda T(X) \}  \\
&   + \frac{1}{2}\{ \|X -X^*\|_F^2 - \mu\|\A(X) - \A(X^*)\|^2_2 \} \\
\geq & \mu \{ \ \frac{1}{2} \|\A(X) - b\|^2_2 + \lambda T(X) \ \} 
= \mu C_\lambda(X) \\
\geq & \mu C_\lambda(X^*) =  C_{\lambda,\mu}(X^*,X^*)
\end{array}
\end{equation}
The first inequality is due to the fact:
\begin{equation}
\begin{array}{ll}
\| \A(X) - \A(X^*) \|_2^2 
& = \| A \vc(X) - A \vc(X^*) \|_2^2 \\
& \leq \|A\|_2^2  \ \|\vc(X-X^*)\|_2^2 \\
& \leq \|\A\|_2^2 \ \|X-X^*\|_F^2
\end{array}
\end{equation}
\end{proof}

By the above Theorems and Lemmas, if $X^*$ is a global minimizer of $C_\lambda(X)$, it is also a 
global minimizer of $C_{\lambda, \mu}(X,Z)$ with $Z=X^*$, which has a closed form solution formula. 
Thus we arrive at the following fixed point equation for the 
global minimizer $X^*$: 
\begin{equation} \label{equ: fix point, global mini}
X^* = G_{\lambda \mu,a}(B_\mu(X^*)).
\end{equation}
Suppose the SVD for matrix $B_\mu(X^*)$ is $ U \, \Diag(\sigma_b^*) \, V^T $,
then 
$$ 
X^* = U \, \Diag( g_{\lambda \mu,a}(\sigma_b^*)) \, V^T,
$$
which means that the singular values of $X^*$ satisfy $\sigma^*_i = g_{\lambda \mu,a}(\sigma_{b,i}^*)$, for $i = 1,...,m$.

\section{TS1 thresholding algorithms}
Next we will utilize fixed point equation (\ref{equ: fix point, global mini}) to  derive two thresholding 
algorithms for TS1 regularized problem (\ref{mini: ts1 regular}). 
As in \cite{Threshold-TL1,DCATL1}, from the equation 
$X^* = G_{\lambda \mu,a}(B_\mu(X^*)) = U \Diag(g_{\lambda \mu,a}(\sigma)) V^T$,
we will replace optimal matrix $X^*$ with $X^k$ on the left and $X^{k-1}$ on the 
right at the $k$-th step of iteration as: 
 \begin{equation}
\begin{array}{ll}
X^k & = G_{\lambda \mu,a}(B_\mu(X^{k-1})) \\
    & = U^{k-1} \, \Diag\left( g_{\lambda \mu,a}( \sigma^{k-1}) \right) \, V^{k-1,T},
\end{array}
\end{equation}
where unitary matrices $U^{k-1}$, $V^{k-1}$ and singular values $\{ \sigma^{k-1} \}$ 
come from the SVD decomposition of matrix $B_{\mu}(X^{k-1})$.  
Operator $g_{\lambda \mu,a}(\cdot)$ is defined in (\ref{func: g formula}), and 
\begin{equation}
g_{\lambda \mu,a}(w) = \left\{
\begin{array}{ll}
0, & {\rm if} \; |w| < t; \\
h_{\lambda \mu}(w), & {\rm if} \;  |w| \geq t.
\end{array}
\right.
\end{equation}

Recall that the thresholding parameter $t$ is: 
\begin{equation}  \label{form: optiaml t}
t = \left\{
\begin{array}{ll}
       t^*_2 = \lambda \mu \frac{a+1}{a},                & \text{if} \ \lambda \leq \frac{a^2}{2(a+1)\mu}; \\
       t^*_3 = \sqrt{2 \lambda \mu (a+1)} - \frac{a}{2}, & \text{if} \ \lambda > \frac{a^2}{2(a+1)\mu}.
\end{array}
\right.
\end{equation}

With an initial matrix $X^0$, we obtain an iterative algorithm, 
called TS1 iterative thresholding (IT) algorithm. It is the basic TS1 iterative scheme. 
Later, two adaptive and more efficient IT algorithms (TS1-s1 and TS1-s2) will be introduced. 

\subsection{Semi-Adaptive Thresholding Algorithm -- TS1-s1}
We begin with formulating an optimal condition for regularization parameter $\lambda$, 
which serves as the basis for the parameter selection and updating in this 
semi-adaptive algorithm. 

Suppose optimal solution matrix $X$ has rank $r$, by prior knowledge or estimation. 
Here, we still assume  $m \leq n$. 
For any $\mu$, denote $B_{\mu}(X) = X + \mu A^T( b-\A(X) )$ and  $\{ \sigma_i \}_{i=1}^{m}$ 
are the $m$ non-negative singular values for $B_{\mu}(X)$.

Suppose that $X^*$ is the optimal solution matrix of (\ref{mini: ts1 regular}), 
and the singular values of matrix $B_{\mu}(X^*)$ are denoted as 
$ \sigma^*_1 \geq \sigma^*_2 \geq ... \geq \sigma^*_m $. 
Then by the fixed equation (\ref{equ: fix point, global mini}), 
the following inequalities hold: 
\begin{equation}
\begin{array}{lll}
\sigma^*_i > t     & \Leftrightarrow &  i \in \{ 1,2,...,r \}, \\
\sigma^*_j \leq t & \Leftrightarrow &  j \in \{ r+1,r+2,...,m \}, 
\end{array}
\end{equation}
where $t$ is our threshold value. Recall that $t^*_3 \leq t \leq t^*_2$. So 
\begin{equation}
\begin{array}{l}
\sigma^*_r \geq t \geq t^*_3 = \sqrt{2\lambda \mu (a+1)} - \frac{a}{2}; \\
\sigma^*_{r+1} \leq  t \leq t^*_2 = \lambda \mu \frac{a+1}{a}. 
\end{array}
\end{equation}
It follows that
\begin{equation*}
\lambda_1 \equiv \dfrac{a \sigma^*_{r+1}}{\mu (a+1)} \leq \lambda 
\leq \lambda_2 \equiv \dfrac{(a+2 \sigma^*_{r})^2}{8(a+1)\mu} 
\end{equation*}
or $\lambda^* \in [\lambda_1, \lambda_2]$.

The above estimate helps to set optimal regularization parameter. 
A choice of $\lambda^*$ is
\begin{equation} \label{equ: lambda}
\lambda^* = \left\{
\begin{array}{ll}
\lambda_1, & \quad \text{if} \ \ 
   \lambda_1 \leq \frac{a^2}{2(a+1)\mu}, \ \ \text{then} \ \ \lambda^* \leq \frac{a^2}{2(a+1)\mu} 
                                         \Rightarrow t = t^*_2; \\
\lambda_2, & \quad \text{if} \ \ 
   \lambda_1 > \frac{a^2}{2(a+1)\mu},    \ \ \text{then} \ \ \lambda^* > \frac{a^2}{2(a+1)\mu}
                                        \Rightarrow t = t^*_3.
\end{array} \right.
\end{equation}

In practice, we approximate $B_{\mu}(X^*)$ by $B_{\mu}(X^n)$ in  (\ref{equ: lambda}), so 
$ \lambda_1 = \dfrac{a \sigma^*_{r+1}}{\mu (a+1)}$, and
$ \lambda_2 = \dfrac{(a+2 \sigma^*_{r})^2}{8(a+1)\mu}$.
We choose optimal parameter $\lambda$ at the $n$-th step as 
\begin{equation} \label{form: lamba in s2}
\lambda^*_n = \left\{
\begin{array}{ll}
\lambda_1, & \quad \text{if} \ \ \lambda_1 \leq \frac{a^2}{2(a+1)\mu},\\
\lambda_2, & \quad \text{if} \ \ \lambda_1 > \frac{a^2}{2(a+1)\mu}.
\end{array} \right.
\end{equation}

This way, we obtain an adaptive iterative algorithm without pre-setting the  
regularization parameter $\lambda$. The TL1 parameter $a$ is still free 
and needs to be selected beforehand. 
Thus the algorithm is overall semi-adaptive, called TS1-s1 for short and summarized 
in Algorithm \ref{algorithm: TS1-s1}.

\begin{algorithm}[H]
\caption{TS1-s1 threshold algorithm} 
\label{algorithm: TS1-s1} 
\begin{algorithmic}  
\STATE Initialize: \ \ Given $X^0$ and parameter $\mu$ and $a$.  
\WHILE{NOT converged} 
  \STATE  1. $Y^n = B_{\mu}(X^n) = X^n - \mu \A^*(\A(X^n)-b)$, \\
         \ \ \ and compute SVD of $Y^n$ as $Y^n = U \, \Diag(\sigma) \, V^T$ ;  \\ 
  \STATE  2. Determine the value for $\lambda^n$ by (\ref{form: lamba in s2}), \\ 
         \ \ \ then obtain related threshold value $t^n$ by (\ref{form: optiaml t}); \\       
  \STATE  3. $X^{n+1} = G_{\lambda^n \mu, a}(Y^n) =  U \Diag(g_{\lambda^n \mu, a}(\sigma)) V^T$; \\
  \STATE  Then, $n \rightarrow n+1.$
\ENDWHILE 
\end{algorithmic} 
\end{algorithm} 

\subsection{Adaptive Thresholding Algorithm -- TS1-s2}
Different from TS1-s1 where the parameter '$a$' needs to be determined manually, here at each 
iterative step, we choose $a=a_n$ such that equality  
$\lambda_n = \frac{a^2_n}{2(a_n +1)\mu_n}$ holds. 
The threshold value $t$ is given by a single formula with $t = t^*_3 = t^*_2$.
 
Putting $\lambda = \frac{a^2}{2(a +1) \mu}$ at critical value, 
the parameter $a$ is expressed as: 
\begin{equation} \label{form: scheme 2 a lambda}
a = \lambda\mu + \sqrt{(\lambda \mu)^2 + 2 \lambda \mu }.
\end{equation}
The threshold value is: 
\begin{equation} \label{form: scheme 2 t}
t  = \lambda \mu \frac{a+1}{a} 
   = \frac{\lambda\mu}{2} + \frac{\sqrt{(\lambda \mu)^2 + 2 \lambda \mu }}{2}.
\end{equation}

Let $X^*$ be the TL1 optimal solution and $\sigma^*$ be the singular values for matrix $B_{\mu}(X^*)$. 
Then we have the following inequalities: 
\begin{equation}
\begin{array}{lll}
\sigma^*_i > t & \Leftrightarrow &  i \in \{ 1,2,...,r \}, \\
\sigma^*_j \leq t & \Leftrightarrow &  j \in \{ r+1,r+2,...,m \}.
\end{array}
\end{equation}
So, for parameter $\lambda$, we have: 
$$ \dfrac{2 (\sigma^*_{r+1})^2}{1+2 \sigma^*_{r+1}} \leq \lambda 
\leq \dfrac{2 (\sigma^*_{r})^2}{1+2 \sigma^*_{r}}.$$
Once the value of $\lambda$ is determined, the parameter $a$ is 
given by (\ref{form: scheme 2 a lambda}).

In the iterative method, we approximate the optimal solution $X^*$ by 
$X^n$ and further use $B_{\mu}(X^n)$'s singular values $\{\sigma^n_i\}_i$ 
to replace those of $B_{\mu}(X^*)$.
The resulting parameter selection is:
\begin{equation}  \label{form: lamba and a for s3}
\begin{array}{l}
\lambda_n = \dfrac{2 (\sigma^n_{r+1})^2}{1+2 \sigma^n_{r+1}}; \\
a_n = \lambda_n \mu_n + \sqrt{(\lambda_n \mu_n)^2 + 2 \lambda_n \mu_n }.
\end{array}
\end{equation}

In this algorithm (TS1-s2 for short), only parameter $\mu$ is fixed, satisfying inequality
$\mu \in (0, \|A\|^{-2})$. 
Its algorithm is summarized in Algorithm \ref{algorithm: TS1-s2}.  

\begin{algorithm}[H]
\caption{TS1-s2 threshold algorithm} 
\label{algorithm: TS1-s2} 
\begin{algorithmic}  
\STATE Initialize: \ \ Given $X^0$ and parameter $\mu$.  
\WHILE{NOT converged} 
  \STATE  1. $Y^n = X^n - \mu \A^*(\A(X^n)-b)$, and compute SVD of $Y^n$ 
                as $Y^n = U \, \Diag(\sigma) \, V^T$ ;  \\ 
  \STATE  2. Determine the values for $\lambda^n$ and $a^n$ by (\ref{form: lamba and a for s3}), \\
         \ \ \ \ then update threshold value $t^n = \lambda^n \mu \frac{a^n+1}{a^n}$;  \\       
  \STATE  3. $X^{n+1} = G_{\lambda^n \mu, a^n}(Y^n) =  U \, \Diag(g_{\lambda^n \mu, a}(\sigma)) \, V^T$; \\
  \STATE  Then $n \rightarrow n+1.$
\ENDWHILE 
\end{algorithmic} 
\end{algorithm} 

\section{Numerical experiments}
In this section, we present numerical experiments to illustrate the effectiveness of
our Algorithms: semi-adaptive TS1-s1 and adaptive TS1-s2, 
compared with several state-of-art solvers on matrix completion problems
\footnote{TS1 matlab codes can be downloaded from https://github.com/zsivine/TS1-algorithms}.
The comparison solvers include:
\begin{itemize}
\item  LMaFit  \cite{LMaFit},
\item  FPCA \cite{FPCA}, 
\item  sIRLs-q \cite{Fazel}, 
\item  IRucLq-M \cite{IRucLq},
\item  LRGeomCG \cite{vandereycken2013low}
\end{itemize}

The code LMAFit solves a low-rank factorization model, instead of computing SVD 
which usually takes a big chunk of computation time. Also part of its codes is written
in C, same as LRGeomCG. So once this method converges, it is the fastest method 
among all comparisons. All others codes are implemented under Matlab environment 
and involve SVD approximated by fast Monte Carlo algorithms 
\cite{drineas2006fastI,drineas2006fastII}. 
FPCA is a nuclear norm minimization code, while sIRLs-q and IRucLq-M are iterative 
reweighted least square algorithms for Schatten-q quasi-norm optimizations.
LRGeomCG algorithm explores matrix completion based on Riemannian optimization. 
It tries to minimize the least-square distance on the sampling set over the Riemannian 
manifold of fixed-rank matrices. When the rank information is known priori or well 
approximated, this method is efficient and accurate, as shown in these experiments 
below, especially for standard Gaussian matrices. But a drawback of LRGeomCG is that 
the rank of the manifold is fixed. Basically, it is hard for it to handle unknown rank cases. 

In our TS1 algorithms, MC SVD algorithm \cite{drineas2006fastII} is implemented at each 
iteration step, same as FPCA. We also tried another fast SVD approximation algorithms, but 
MC SVD is the most suitable one, satisfying both speed and accuracy requirements in one 
iterative algorithm. 
All our tests were performed on a $Lenovo$ desktop: 16 GB of RAM and Intel@ Core Quad 
processor $i7$-$4770$ with CPU at $3.40$GHz under 64-bit Ubuntu system.

We tested and compared these solvers on low rank matrix completion problems
under various conditions, including multivariate Gaussian, uniform and $\chi^2$ distributions.
We also tested the algorithms on grayscale image recovery from partial observations
(image inpainting).  

\subsection{Implementation details}

In the following series of tests, we generated random matrices 
$$
	M = M_L M_R^T \in \R_{m \times n},
$$ 
where matrices $M_L$ and $M_R$ are in spaces $\R^{m \times r}$ and 
$\R^{n \times r}$ respectively.

By setting parameter $r$ to be small, we obtain a low rank matrix $M$ with rank 
at most $r$. After this step, we uniformly random-sampled a subset $\omega$ 
with $p$ entries from $M$. 
The following quantities help to quantify the difficulty of a recovery problem.
\begin{itemize}
\item SR (Sampling ratio): SR = $p/mn$.
\item FR (Freedom ratio): FR = $r(m+n-r)/p$, which is the freedom of rank $r$ 
matrix divided by the number of measurement. According to \cite{FPCA} , 
if FR $>1$, there are infinite number of matrices with rank $r$ and the given entries. 
\item $r_m$ (Maximum rank with which the matrix can be recovered):
$$
	r_m = \lfloor \frac{m+n-\sqrt{(m+n)^2-4p}}{2} \rfloor  
	\ \ \textit{(floor function)},
$$ 
which is defined as the largest rank such that FR $\leq 1$.
\end{itemize}

The TS1 thresholding algorithms do not guarantee a global minimum in general, similar 
to non-convex schemes in 1-dimensional compressed sensing problems. Indeed we observe 
that TS1 thresholding with random starts may get stuck at local minima especially when 
parameter FR (freedom ratio) is high or the matrix completion is difficult. A good initial matrix
 $X^0$ is important for thresholding algorithms. In our numerical experiments, instead of 
choosing $X^0 = 0$ or random, we set $X^0$  equal to matrix $M$ whose elements are 
as observed on $\Omega$ and zero elsewhere.

The stopping criterion is
\begin{equation*}
	\dfrac{\|X^{n+1} - X^n \|_F}{ \max\{ \|X^n \|_F, 1 \} } \leq tol
\end{equation*}
where $X^{n+1}$ and $X^{n}$ are numerical results from two contiguous iterative steps, and 
$tol$ is a moderately small number.  In all these following experiments, we fix $tol = 10^{-6}$
with maximum iteration steps $1000$.

We also use the relative error
\begin{equation} \label{relative error}
	\mathrm{rel.err} = \dfrac{\|X_{opt} - M \|_F}{ \| M \|_F } 
\end{equation}
to estimate the closeness of $X_{opt}$ to $M$, where $X_{opt}$ is the "optimal" solution 
produced by all numerical algorithms.    

\subsubsection{Rank estimation}

For thresholding algorithms, rank $r$ is the most important parameter, especially for our
TS1 methods, where thresholding value $t$ is determined based on $r$. If the true rank 
$r$ is unknown, we adopt the rank decreasing estimation method 
(also called maximum eigengap method) as in \cite{IRucLq,LMaFit}, thereby
extending both TS1-s1 and TS1-s2 schemes to work with an overestimated initial rank 
parameter $K$. In the following tests, unless otherwise specified, we set 
$K = \lfloor1.5\, r\rfloor$.
The idea behind this estimation method is as follows. Suppose that at step $n$, 
our current matrix is $X$. The eigenvalues of $X^T\, X$ are arranged with descending 
order and 
$\lambda_{r_{min}} \geq \lambda_{r_{min}+1} \geq ... \geq \lambda_{K +1} > 0 $ 
is the $r_{min}$-th through $K+1$-th eigenvalues of $X^TX$, where  $r_{min}$ is 
manually specified minimum rank estimate. 
Then we compute the quotient sequence 
$ \widehat{\lambda_i} = \lambda_i/\lambda_{i+1} $, 
$i = r_{min} , ... , K$.  
Let 
$$ 
	\widetilde{K}  = \argmin \limits_{r_{min} \leq i \leq K} \widehat{\lambda_i},
$$
the corresponding index for maximal element of $\{ \widehat{\lambda_i} \}$. 
If the eigenvalue gap indicator
$$
	\tau =  \widehat{\lambda}_{\widetilde{K}} (K - r_{min} + 1)    /  \sum \limits_{i \neq \widetilde{K}} \widehat{\lambda_i}   
 \ \ 	> 10,
$$
we adjust our rank estimator from $K$ to $\widetilde{K}$. During numerical simulations,
we did this adjustment only once for each problem. In most cases, this estimation 
adjustment is quite satisfactory and the adjusted estimate is very close to the true rank $r$. 

\subsubsection{Choice of a: optimal parameter testing for TS1-s1}

A major difference between TS1-s1 and TS1-s2 is the choice of 
parameter $a$, which influences the behaviour of penalty function $\rho_a(\cdot)$ of  
TS1. When '$a$' tends to zero, the function $T(X)$ approaches the rank. 

We tested TS1-s1 on small size low rank matrix completion with different `$a$' values, 
varying among  $\{0.1, 0.5, 1, 10, 100 \}$, for both known rank scheme and 
the scheme with rank estimation. In these tests, $M = M_L M_R^T$ is a $100 \times 100$ 
random matrix, where $M_L$ and $M_R$ are generated under i.i.d  standard normal 
distribution. The rank $r$ of $M$ varies from $10$ to $22$. 

For each value of `$a$', we conducted $50$ independent tests with different $M$ and
sample index set $\omega$. We declared $M$ to be recovered successfully if the 
relative error (\ref{relative error}) was less than $5 \times 10^{-3}$. 
The test results for known rank scheme and rank estimation scheme 
are both shown in Figure \ref{figure: optimal a test TS1-s1}.   
The success rate curves of rank estimation scheme are not as clustered as those of 
known rank scheme. 
In order to clearly identify the optimal parameter '$a$', we ignored the curve 
of $a= 0.1$ in the right figure as it is always below all others. 
The vertical red dotted line there indicates the position where FR $= 0.6$. 

\begin{figure}
\begin{tabular}{lr}
\begin{minipage}[t]{0.5\linewidth}
\includegraphics[scale=0.45]{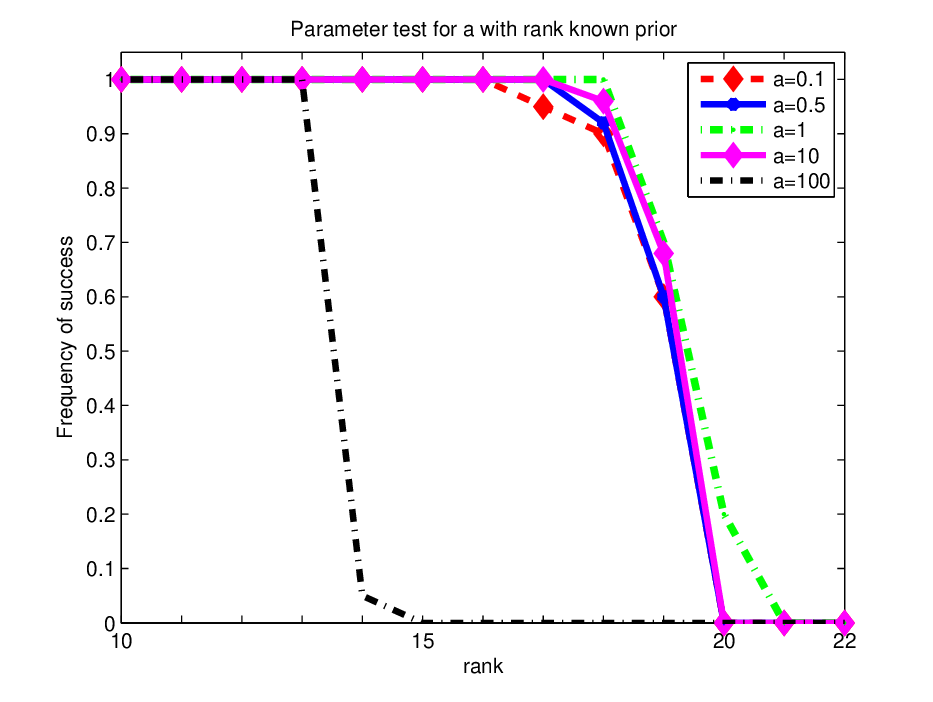}
\caption*{Rank is known a prior}   
\end{minipage}  &
\begin{minipage}[t]{0.4\linewidth}
\includegraphics[scale=0.45]{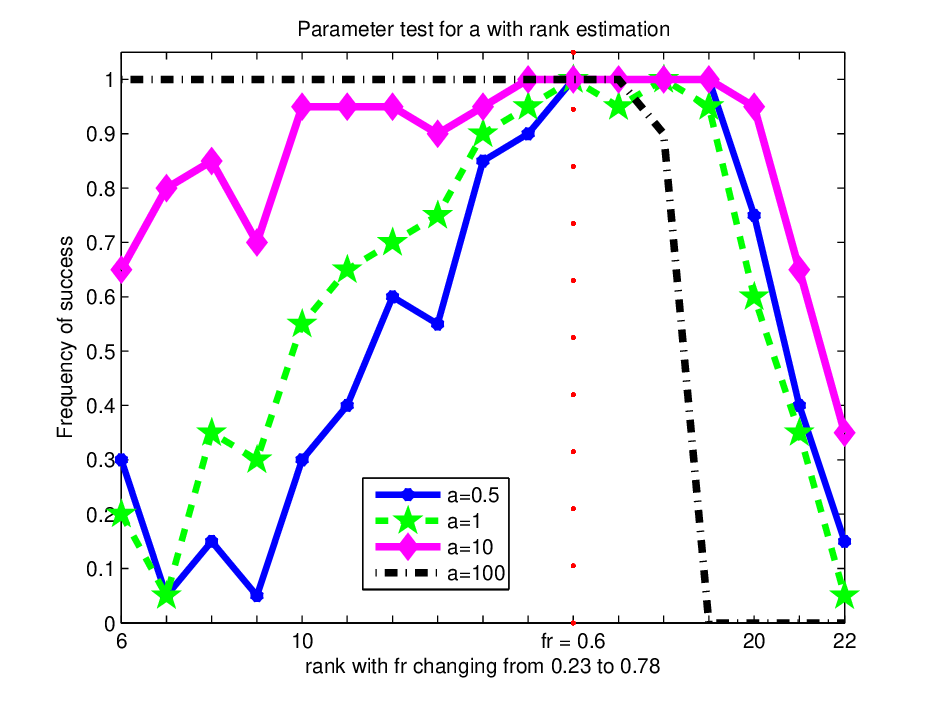}
\caption*{Rank is estimated}   
\end{minipage}
\end{tabular}
\caption{Optimal parameter test for semi-adaptive method: TS1-s1}
\label{figure: optimal a test TS1-s1}
\end{figure}  

It is interesting to see that for known rank scheme, parameter $a =1$ is the optimal strategy, which  
coincides with the optimal parameter setting in \cite{Threshold-TL1}.  It is observed that 
when we use thresholding algorithm under transformed L1 (TL1) or transformed Schatten-1 (TS1) quasi norm, it is usually 
optimal to set $a = 1$ with given information of sparsity or rank. 
However, for the scheme with rank estimation, it is more complicated. Based on our
tests, if FR $< 0.6$, it is better to set $a \geq 100$ to reach good performance. 
On the other hand, if FR $>0.6$, $a= 10$ is nearly the optimal choice. 
So for all the following tests, when we apply TS1-s1 with rank estimation, 
the parameter $a$ is set to be 
\begin{equation*}
a =  \left\{
\begin{array}{ll}
1000, & \ \  \text{if \ FR} < 0.6; \\
10,     & \ \  \text{if \ FR} \geq 0.6.
\end{array}
\right.
\end{equation*} 

In applications where FR is not available, we suggest to use $a=10$, since 
its performance is also acceptable if FR $< 0.6$. 

\subsection{Completion of Random Matrices}
The ground truth matrix $M$ is generated as the matrix product of two low 
rank matrices $M_L$ and $M_R$. Their dimensions are $m \times r$ and 
$n \times r$ respectively, with $r \ll \min(m,n) $. 
In these following experiments, except clearly stated,
$M_L$ and $M_R$ are generated with multivariate normal distribution 
$\mathcal N (\mu, \Sigma)$, with $\mu = 1$ and 
$$
	\Sigma = \{ (1-cov)*I_{(i=j)} + cov  \}_{r \times r} 
$$ 
determined by parameter $cov$. Thus matrix $M = M_L M_R^T$ has rank at most $r$. 

It is known that success recovery is related to FR. The higher FR is, the harder 
it is to recover the original low rank matrix. 
In the first batch of tests, we varied rank $r$ and fixed all other parameters, 
i.e. matrix size $(m,n)$, sampling rate $(sr)$. Thus FR was changing
along with rank. 
 

It is observed that the performance of TS1-s1 and TS1-s2 are very different, 
due to adopting single or double thresholds.  
TS1-s2 uses only one (smooth) thresholding scheme with changing parameter $a$. 
It converges faster than TS1-s1 when the rank is known, see subsection \ref{subsec known rank}. 
On the other hand, TS1-s1 utilizes two (smooth and discontinuous) thresholding schemes, 
and is more robust in case of overestimated rank. TS1-s1 outperforms TS1-s2 when rank 
estimation is used in lieu of the true rank value,  see subsection \ref{subsec unknown rank}. 
IRucL-q method is found to be very robust for varied covariance and rank estimation, 
yet it underperforms TS1 methods at high FR, even with more computing time.  
Though TS1 methods rely on the same rank estimation method as IRucL-q, 
IRucL-q achieves the best results in the absence of true rank value. 
A possible reason is that in IRucL-q iterations, the singular values of matrix 
$X$ are computed more accurately. In TS1, singular values are computed by fast Monte Carlo method 
at every iteration. Due to random 
sampling of Monte Carlo method, there are more errors especially at the beginning stage of iteration. 
The resulting matrices $X^n$ may cause less accurate rank estimation.

\subsubsection{Matrix completion with known rank} 
\label{subsec known rank}

In this subsection, we implemented all six algorithms under 
the condition that true rank value is given. 
They are TS1-s1, TS1-s2, sIRLS-q, IRucL-q, LMaFit and LRGeomCG. 
We skipped FPCA since rank is always adaptively estimated there. 

\subsubsection*{Gaussian matrices with different ranks}
In these tests, matrix $M =  M_L M_R^T$ was generated under uncorrelated 
normal distribution with $\mu =1$. We conducted tests both on low dimensional matrices 
with $m = n = 100$ (Table \ref{table: uncor small known r}) 
and high dimensional matrices with $ m = n = 1000$ (Table \ref{table: uncor large known r}).  
Tests on non-square matrices with $m \neq n$ show similar results. 
  
In Table \ref{table: uncor small known r}, rank $r$ varies from $5$ to $18$, while FR 
increases from $0.2437$ up to $0.8190$.   
For lower rank (less than $15$), LMaFit is the best algorithm with low relative errors 
and fast convergence speed. Part of the reason is that this method does not involve 
SVD (singular value decomposition) operations during iteration.  

LRGeomCG approaches the performance of LMaFit when $r \leq 10$. However, as FR 
values are above $0.7$, it became hard for LMaFit to find truth low rank matrix $M$. 
Its performance is not as good as stated in paper \cite{vandereycken2013low} with
possible reason that we generate M with mean $\mu$ equal to 1, instead of $0$ in 
\cite{vandereycken2013low}. We also tested LRGeomCG with $\mu = 0$ where it has
very small relative error and also fast convergence rate. 

It is also noticed that in Table \ref{table: uncor small known r}, the two TS1 algorithms 
performed very well and remained stable for different FR values. At similar order of 
accuracy, the TL1s are faster than IRucL-q.

\begin{table}
\caption{
Comparison of TS1-s1, TS1-s2, sIRLS-q, IRucL-q, LMaFit and LRGeomCG on recovery of 
uncorrelated multivariate Gaussian matrices at known rank, $m = n = 100 $, SR $=0.4$, 
with stopping criterion $tol = 10^{-6}$.
} 
\label{table: uncor small known r}
\centering
\tabcolsep=0.2cm
\renewcommand*\arraystretch{0.9}
\begin{tabular}{cc|cc|cc|cc}
\hline
\multicolumn{2}{c|}{Problem} & \multicolumn{2}{c|}{TS1-s1}  & \multicolumn{2}{c|}{TS1-s2}  & 
\multicolumn{2}{c|}{sIRLS-q*} \\ \hline
rank & FR & rel.err & time & rel.err & time & rel.err & time  \\ \hline
5 & 0.2437 & 1.89e-05 & 0.11 & 7.58e-07 & 0.13 & 7.09e-06 & 0.80 			    \\ \hline
6 & 0.2910 & 7.13e-06 & 0.14 & 7.37e-07 & 0.15 & 8.59e-06 & 1.01 			\\ \hline
7 & 0.3377 & 1.39e-05 & 0.15 & 6.34e-07 & 0.17 & 8.14e-06 & 1.09 			\\ \hline
8 & 0.3840 & 2.04e-05 & 0.16 & 7.70e-07 & 0.20 & 1.31e-05 & 1.43 				\\ \hline
9 & 0.4298 & 2.08e-05 & 0.23 & 9.97e-07 & 0.25 & 2.02e-05 & 1.88 				\\ \hline
10 & 0.4750 & 3.26e-05 & 0.33 & 1.11e-06 & 0.34 & 1.93e-02 & 4.49 				\\ \hline
14 &	0.6510 &	1.10e-05 & 0.53 & 1.03e-05 & 0.52 & --- & --- 						    \\ \hline
15 &	0.6937 &	1.05e-05 & 0.66 & 9.88e-06 & 0.64 &  --- & --- 					\\ \hline
16 &	0.7360 &	3.86e-05 & 0.91 & 1.79e-05 & 0.87 &  --- & --- 			\\ \hline
17 &	0.7778 &	1.50e-04 & 1.03 & 7.10e-05 & 1.00 &  --- & --- 				\\ \hline
18 &	0.8190 & 	5.63e-04 & 1.00 & 4.15e-04 & 1.00 &  --- & --- 					\\ \hline
\hline  
\multicolumn{2}{c|}{Problem} &  \multicolumn{2}{c|}{IRucL-q} & \multicolumn{2}{c}{LMaFit} 
& \multicolumn{2}{c}{LRGeomCG}   \\ \hline
rank & FR & rel.err & time & rel.err & time & rel.err & time  \\ \hline
5 & 0.2437   &  7.86e-06 & 1.82 & 1.96e-06 & 0.02	& 1.03e-06 & 0.03 			    \\ \hline
6 & 0.2910   &  1.14e-05 & 2.15 & 2.18e-06 & 0.02 	&	1.22e-06 & 0.04  		\\ \hline
7 & 0.3377   &  1.28e-05 & 2.24 & 2.27e-06 & 0.03 	&	1.37e-06 & 0.05		\\ \hline
8 & 0.3840   &  3.03e-05 & 2.33 & 2.67e-06 & 0.03 	&	1.66e-06 & 0.06		\\ \hline
9 & 0.4298   &  1.68e-04 & 2.38 & 3.21e-06 & 0.05 	&	1.88e-06 & 0.07 		\\ \hline
10 & 0.4750  &  3.21e-04 & 2.49 & 3.54e-06 & 0.08	&	1.87e-06 & 0.08		\\ \hline
14 &	0.6510 &   3.80e-05 & 7.25 & 5.74e-06 & 0.21 	&	3.20e-02 & 0.34 				    \\ \hline
15 &	0.6937 &	5.28e-05 & 9.29 & 5.87e-02 & 0.33 	&	3.49e-02 & 0.47				\\ \hline
16 &	0.7360 &	7.57e-05 & 12.34 & 1.44e-01 & 0.34 &	1.91e-01 & 0.99 					\\ \hline
17 &	0.7778 &	9.40e-05 & 15.31 & 3.80e-01 & 0.39 &	 5.73e-01 & 0.71					\\ \hline
18 &	0.8190 & 	1.49e-04 & 22.27 & 4.43e-01 & 0.40 &	 9.17e-01 & 0.94  					\\ \hline

\end{tabular}
\begin{tablenotes}
      \small
      \item * Notes: 1. The sIRLS-q iterations did not converge when rank $> 14$ 
             and FR $\geq 0.65$. Comparison is skipped over this range.
     2. Matrix $M$ is generated from multivariate normal distribution with mean $\mu = 1$,
     instead of $0$.  
    \end{tablenotes}
\end{table}

For large size matrices ($m = n = 1000$), rank $r$ is varied from $50$ to $110$, 
see table \ref{table: uncor large known r}. 
The sIRLS-q and LMaFit only worked for lower FR.  
IRucL-q can still produce satisfactory results with relative error around $10^{-3}$, but its iterations 
took longer time. In \cite{IRucLq}, it was carried out by high speed-performance CPU with many cores. 
Here we used an ordinary processor with only 4 cores and 8 threads.  It is believed that with a better 
machine, IRucL-q will be much faster, since parallel computing is embedded in its codes. 
As seen in the table, LRGeomCG is always convergent and achieves almost same accuracy with TS1-s1 and TS1-s2. 
However, its computation time grows fast with increasing rank. 

A little difference between the two TS1 algorithms began to emerge when matrix size is large.  Although when rank is given, 
they all performed better than other schemes, adaptive TS1-s2 is a little faster than semi-adaptive TS1-s1. It is believed 
by choosing optimal parameter $a$, TS1-s1 will be improved. The parameter $a$ 
is related to matrix $M$, i.e. how it is generated,
its inner structure, and dimension.  
In TS1-s2, the value of parameter $a$ does not need to be manually determined.

\begin{table}
\caption{Numerical experiments on recovery of uncorrelated multivariate Gaussian matrices 
at known rank, $m = n = 1000 $, SR $=0.3$. } 
\label{table: uncor large known r}
\centering
\tabcolsep=0.2cm
\renewcommand*\arraystretch{1}
\begin{tabular}{cc|cc|cc|cc}
\hline
\multicolumn{2}{c|}{Problem} & \multicolumn{2}{c|}{TS1-s1} & \multicolumn{2}{c|}{TS1-s2} 
& \multicolumn{2}{c|}{sIRLS-q}  \\ \hline
rank & FR & rel.err & time & rel.err & time & rel.err & time  \\ \hline
50 &	0.3250  & 5.95e-06 & 8.06 & 5.88e-06 & 6.95 & 4.85e-06 & 45.20 				\\ \hline
70 &	0.4503 & 6.94e-06 & 13.37 & 6.78e-06 & 11.95 & 2.46e-02 & 128.65 			\\ \hline
90 &	0.5730 &	7.83e-06 & 22.13 & 7.77e-06 & 18.81 & 9.86e-02 & 206.32 					\\ \hline
110 &	0.6930 &	1.23e-04 & 29.91 & 3.47e-05 & 29.50 & 2.27e-01 & 282.84  						\\ \hline
\hline
\multicolumn{2}{c|}{Problem} & \multicolumn{2}{c|}{IRucL-q} & \multicolumn{2}{c}{LMaFit}  
& \multicolumn{2}{c}{LRGeomCG} \\ \hline
rank & FR & rel.err & time & rel.err & time & rel.err & time  \\ \hline
50 &	0.3250  & 9.55e-06 & 485.30 & 1.74e-06 & 6.04 	&	1.11e-06 & 8.31  		\\ \hline
70 &	0.4503  & 3.77e-05 & 606.95 & 3.54e-02 & 23.20 	&	1.50e-06 & 20.87		\\ \hline
90 &	0.5730  & 4.16e-04 & 623.37 & 1.60e-01 & 24.94 	&	2.13e-06 & 52.77 	\\ \hline
110 & 0.6930  & 2.41e-03 & 640.66 & 2.45e-01 & 29.19 &	3.22e-06 & 112.30 	\\ \hline 
\end{tabular}
\end{table}

\subsubsection*{Gaussian Matrices with Different Covariance}
In this subsection, the rank $r$, the sampling rate, and the freedom ratio 
FR are fixed. We varied parameter $cov$ to 
generate covariance matrices of multivariate normal distribution. 

In Table \ref{table: cor small known r}, we chose 
two rank values, $r = 5$ and $r = 8$. 
It is harder to recover the original matrix $M$ when it is more 
coherent. IRucL-q does better in this regime. Its mean computing
time and relative errors are less influenced by the changing $cov$.  
Results on large size matrices are shown in Table 
\ref{table: cor large known r}.   
TS1-s2 scheme is much better than TS1-s1, both in 
relative error and computing time. In small size matrix 
experiments, TS1-s2 is the best among comparisons. 

\begin{table}
\caption{ Numerical experiments on multivariate Gaussian matrices with varying covariance at 
known rank, $ m = n = 100 $, SR $=0.4 $. } 
\label{table: cor small known r}
\centering
\tabcolsep=0.2cm
\begin{tabular}{cc|cc|cc|cc}
\hline
\multicolumn{2}{c|}{Problem} & \multicolumn{2}{c|}{TS1-s1} & \multicolumn{2}{c|}{TS1-s2} & 
\multicolumn{2}{c|}{sIRLS-q}  \\ \hline
rank & cor & rel.err & time & rel.err & time & rel.err & time  \\ \hline
5 &	0.5 &	6.44e-06 & 0.17 & 5.74e-07 & 0.12 & 3.35e-02 & 3.75 					\\ \hline
5 &	0.6 &	7.28e-06 & 0.28 & 7.15e-07 & 0.13 & 1.34e-01 & 5.58 						\\ \hline
5 &	0.7 &	3.32e-02 & 0.58 & 7.65e-07 & 0.17 & 2.15e-01 & 6.16 						\\ \hline

8 &	0.4 &	7.55e-06 & 0.34 & 7.96e-07 & 0.21 & 1.43e-01 & 6.47 						\\ \hline
8 &	0.5 &	9.84e-03 & 0.51 & 6.14e-06 & 0.19 & 2.68e-01 & 6.19 						\\ \hline
8 &	0.6 &	3.01e-02 & 0.81 & 7.71e-06 & 0.23 & 2.95e-01 & 6.26  						\\ \hline
8 &	0.7 &	6.86e-02 & 0.86 & 7.16e-06 & 0.50 & 3.33e-01 & 6.80  						\\ \hline

\hline
\multicolumn{2}{c|}{Problem}  & \multicolumn{2}{c|}{IRucL-q} & \multicolumn{2}{c}{LMaFit} 
& \multicolumn{2}{c}{LRGeomCG}   \\ \hline
rank & cor & rel.err & time & rel.err & time & rel.err & time  \\ \hline
5 &	0.5  & 8.21e-06 & 1.86 & 2.48e-02 & 0.07 &	1.12e-06 & 0.06 					\\ \hline
5 &	0.6  & 8.76e-06 & 1.85 & 4.48e-02 & 0.15 &	6.98e-02 & 0.09					\\ \hline
5 &	0.7  & 1.37e-05 & 1.71 & 1.10e-01 & 0.27 &	1.22e-01 & 0.11					\\ \hline 

8 &	0.4  & 1.92e-05 & 2.50 & 1.98e-02 & 0.18 &	5.42e-02 & 0.17 		 \\ \hline
8 &	0.5  & 1.38e-05 & 2.54 & 1.21e-01 & 0.25 &	1.17e-01 & 0.17			  \\ \hline
8 &	0.6  & 1.40e-05 & 2.51 & 1.85e-01 & 0.27 &	1.83e-01 & 0.23			   \\ \hline
8 &	0.7  & 1.10e-05 & 2.35 & 2.44e-01 & 0.25 &	2.21e-01 & 0.29 			\\ \hline
\end{tabular}
\end{table}

In Table \ref{table: cor large known r}, 
we fixed rank $=30$ with $cov$ among $\{ 0.1 , ... , 0. 7\}$. 
TS1-s2 is still satisfactory both in accuracy and speed for low covariance (i.e $cov \leq 0.6 $). 
However, for $cov \geq 0.7$, relative errors increased from $10^{-6}$ to around $10^{-4}$. 
It is also observed that IRucL-q algorithm is very stable and robust 
under covariance change. 

\begin{table}
\caption{Numerical experiments on multivariate Gaussian matrices with varying covariance at known rank, 
$ m = n = 1000 $, SR $=0.4$. } 
\label{table: cor large known r}
\centering
\tabcolsep=0.2cm
\begin{tabular}{cc|cc|cc|cc}
\hline
\multicolumn{2}{c|}{Problem} & \multicolumn{2}{c|}{TS1-s1} & \multicolumn{2}{c|}{TS1-s2} & 
\multicolumn{2}{c|}{sIRLS-q}  \\ \hline
rank & cor & rel.err & time & rel.err & time & rel.err & time   \\ \hline
30 & 	0.1 & 3.07e-06 & 9.71 & 3.07e-06 & 3.98 &   4.36e-07 & 13.80  		\\ \hline		
30 &	0.2 & 2.90e-06 & 11.07 & 2.94e-06 & 3.92 &   1.28e-05 & 33.89 		\\ \hline		
30 &	0.3 & 5.54e-03 & 26.64 & 3.02e-06 & 4.13 &   6.65e-02 & 46.02 		\\ \hline		
30 &	0.4 & 1.19e-02 & 28.58 & 3.08e-06 & 4.31 &   1.08e-01 & 50.95 	\\ \hline		
30 &	0.5 & 4.76e-02 & 34.25 & 2.89e-06 & 5.89 &   1.50e-01 & 52.64  	\\ \hline		
30 &	0.6 & 6.89e-02 & 35.69 & 2.89e-06 & 10.28 &   1.89e-01 & 55.70 	\\ \hline		
30 &	0.7 & 8.01e-02 & 33.92 & 6.99e-04 & 20.09 &   2.03e-01 & 51.03  	\\ \hline	

\hline
\multicolumn{2}{c|}{Problem}  & \multicolumn{2}{c|}{IRucL-q} & \multicolumn{2}{c}{LMaFit} 
& \multicolumn{2}{c}{LRGeomCG} \\ \hline
rank & cor & rel.err & time & rel.err & time & rel.err & time   \\ \hline
30 & 	0.1  & 3.13e-06 & 222.90 & 1.19e-06 & 1.83 & 	6.77e-07 & 4.88 	\\ \hline		
30 &	0.2  & 3.16e-06 & 221.34 & 1.14e-06 & 3.16 &	5.68e-07 & 8.84	\\ \hline		
30 &	0.3  & 3.05e-06 & 218.57 & 1.21e-06 & 6.93 &	5.45e-03 & 15.45 	\\ \hline		
30 &	0.4  & 3.29e-06 & 214.52 & 2.06e-02 & 14.72 & 4.82e-02 & 19.15	\\ \hline		
30 &	0.5  & 3.12e-06 & 209.05 & 6.45e-02 & 17.34 & 8.41e-02 & 20.99	\\ \hline		
30 &	0.6  & 3.30e-06 & 207.94 & 9.09e-02 & 18.38 & 1.42e-01 & 21.81	\\ \hline		
30 &	0.7  & 3.15e-06 & 210.06 & 1.15e-01 & 16.37 & 1.67e-01 & 21.63 	\\ \hline	

\end{tabular}
\end{table}

\subsubsection*{Matrices from other distributions}
We also compare algorithms with other distributions, including $(0,1)$ uniform 
distribution and Chi-square distribution with k = 1 (degree of freedom). All other 
parameters are same as Table \ref{table: uncor small known r}. 
The results are displayed at Table \ref{table: uniform small known r} (uniform 
distribution) and Table \ref{table: Chi-square small known r} (Chi-square distribution).
Only partial numerical results are showed here with rank $r = 7,8,9,10,14,15.$ 
From these two tables, two TS1 algorithms have satisfying relative errors and stable 
performance, same as IRuccL-q. For these two non-Gaussian distributions, it becomes 
harder to successfully recover low rank matrix for LMaFit  and LRGeomCG, especially 
when rank $r>10$.

\begin{table}
\caption{
Comparison with random matrices generated from $(0,1)$ 
uniform distribution. Rank $r$ is given and $m = n = 100 $, 
SR $=0.4$, with stopping criterion $tol = 10^{-6}$.
} 
\label{table: uniform small known r}
\centering
\tabcolsep=0.2cm
\renewcommand*\arraystretch{0.9}
\begin{tabular}{cc|cc|cc|cc}
\hline
\multicolumn{2}{c|}{Problem} & \multicolumn{2}{c|}{TS1-s1}  & \multicolumn{2}{c|}{TS1-s2}  & 
\multicolumn{2}{c|}{sIRLS-q*}  
\\ \hline
rank & FR & rel.err & time & rel.err & time & rel.err & time  \\ \hline
7  & 0.3377 &  5.67e-06 & 0.16 & 5.30e-06 & 0.14 & 7.30e-06 & 1.85   \\ \hline
8  & 0.3840 &  6.73e-06 & 0.18 & 6.46e-06 & 0.15 & 1.96e-02 & 3.78   \\ \hline
9  & 0.4298 &  9.13e-06 & 0.24 & 8.42e-06 & 0.20 & ---      & ---    \\ \hline
10 & 0.4750 &  7.62e-06 & 0.27 & 7.12e-06 & 0.20 & ---      & ---    \\ \hline
14 & 0.6510 &  2.23e-05 & 0.59 & 9.24e-06 & 0.44 & ---      & ---    \\ \hline
15 & 0.6937 &  2.34e-05 & 0.81 & 1.12e-05 & 0.58 & ---      & ---    \\ \hline
\hline  
\multicolumn{2}{c|}{Problem} &  \multicolumn{2}{c|}{IRucL-q} & \multicolumn{2}{c}{LMaFit} 
& \multicolumn{2}{c}{LRGeomCG}  \\ \hline
rank & FR & rel.err & time & rel.err & time & rel.err & time  \\ \hline
7  & 0.3377 &  9.55e-06 & 5.00  & 1.98e-06 & 0.05 &  1.48e-06 & 0.08  \\ \hline
8  & 0.3840 &  1.08e-05 & 4.86  & 2.41e-06 & 0.06 &  1.58e-06 & 0.10  \\ \hline
9  & 0.4298 &  1.57e-05 & 6.48  & 2.26e-02 & 0.13 &  2.01e-06 & 0.14  \\ \hline
10 & 0.4750 &  1.80e-05 & 7.09  & 7.28e-03 & 0.11 &  2.09e-06 & 0.13  \\ \hline
14 & 0.6510 &  3.75e-05 & 13.15 & 1.66e-01 & 0.18 &  1.24e-01 & 0.44  \\ \hline
15 & 0.6937 &  5.58e-05 & 17.14 & 2.18e-01 & 0.16 &  1.71e-01 & 0.76  \\ \hline
\end{tabular}
\end{table}

\begin{table}
\caption{
Comparison with random matrices generated from Chi-square distribution with k = 1 
(degree of freedom).
Rank $r$ is given and $m = n = 100 $, 
SR $=0.4$, with stopping criterion $tol = 10^{-6}$.
} 
\label{table: Chi-square small known r}
\centering
\tabcolsep=0.2cm
\renewcommand*\arraystretch{0.9}
\begin{tabular}{cc|cc|cc|cc}
\hline
\multicolumn{2}{c|}{Problem} & \multicolumn{2}{c|}{TS1-s1}  & \multicolumn{2}{c|}{TS1-s2}  & 
\multicolumn{2}{c|}{sIRLS-q*}  
\\ \hline
rank & FR & rel.err & time & rel.err & time & rel.err & time  \\ \hline
7  & 0.3377 & 9.09e-06 & 0.23  & 8.56e-06 & 0.20 & 1.82e-05 & 1.84 \\ \hline
8  & 0.3840 & 1.06e-05 & 0.27  & 8.31e-06 & 0.22 & 1.69e-02 & 2.59 \\ \hline
9  & 0.4298 & 9.90e-06 & 0.30  & 8.79e-06 & 0.25 & ---      & ---    \\ \hline
10 & 0.4750 & 9.52e-06 & 0.33  & 8.64e-06 & 0.28 & ---      & ---    \\ \hline
14 & 0.6510 & 1.48e-05 & 0.64  & 1.20e-05 & 0.58 & ---      & ---    \\ \hline
15 & 0.6937 & 2.23e-05 & 0.83  & 1.32e-05 & 0.73 & ---      & ---    \\ \hline
\multicolumn{2}{c|}{Problem} &  \multicolumn{2}{c|}{IRucL-q} & \multicolumn{2}{c}{LMaFit} 
& \multicolumn{2}{c}{LRGeomCG}  \\ \hline
rank & FR & rel.err & time & rel.err & time & rel.err & time  \\ \hline
7  & 0.3377 & 1.26e-05 & 5.65  & 3.08e-06 & 0.04 & 1.80e-06 & 0.05 \\ \hline
8  & 0.3840 & 1.70e-05 & 7.15  & 3.29e-06 & 0.04 & 2.19e-06 & 0.06 \\ \hline
9  & 0.4298 & 2.21e-05 & 8.33  & 3.75e-06 & 0.08 & 6.83e-03 & 0.11 \\ \hline
10 & 0.4750 & 2.23e-05 & 8.56  & 4.25e-06 & 0.09 & 5.93e-02 & 0.14 \\ \hline
14 & 0.6510 & 5.50e-05 & 14.69 & 1.44e-01 & 0.15 & 1.46e-01 & 0.34 \\ \hline
15 & 0.6937 & 6.61e-05 & 17.75 & 2.54e-01 & 0.15 & 3.03e-01 & 0.57 \\ \hline
\end{tabular}
\end{table}

\subsubsection{Matrix completion with rank estimation}
\label{subsec unknown rank}

We conducted numerical experiments on rank estimation schemes. 
The initial rank estimation is given 
as $1.5\, r$, which is a commonly used overestimate. 
FPCA \cite{FPCA} is included for comparison, while LRGeomCG and sIRLS-q are excluded.
FPCA is a fast and robust iterative algorithm based on nuclear norm regularization. 

We considered two classes of matrices: uncorrelated Gaussian matrices with changing rank; 
correlated Gaussian matrices with fixed rank ($r=5, 10$). 
The results are shown in Table \ref{table: uncor small unknown r} 
and Table \ref{table: cor small unknown r}. 
It is interesting that under rank estimation,
the semi-adaptive TS1-s1 fared much better than TS1-s2.  
In low rank and low covariance cases, TS1-s1 is the best in terms of 
accuracy and computing time among  
comparisons. However, in the regime of high covariance and rank, it became harder for TS1 
methods to perform efficient recovery. IRucL-q did the best, being 
both stable and robust. In the most difficult case,  
at $rank = 15$ and FR approximately equal to $0.7$, IRucL-q can still obtain an accurate 
result with relative error around $10^{-5}$.

\begin{table}
\caption{Numerical experiments for low rank matrix completion algorithms under rank estimation.  
True matrices are uncorrelated multivariate Gaussian,  $ m = n = 100 $, SR $=0.4$. } 
\label{table: uncor small unknown r}
\centering
\tabcolsep=0.05cm
\begin{tabular}{cc|cc|cc|cc|cc|cc|cc}
\hline
\multicolumn{2}{c|}{Problem} & \multicolumn{2}{c|}{TS1-s1} 
& \multicolumn{2}{c|}{TS1-s2} & \multicolumn{2}{c|}{FPCA} 
& \multicolumn{2}{c|}{IRucL-q} & \multicolumn{2}{c}{LMaFit} \\ \hline
rank & FR & rel.err & time & rel.err & time & rel.err & time 
& rel.err & time & rel.err & time \\ \hline
10 &	0.4750 &	7.46e-06 & 0.31 &  2.43e-03 & 0.38 & 2.26e-01 & 0.91 
& 1.84e-05 & 3.41 & 2.64e-01 & 0.01 						\\ \hline
11 &	0.5198 &	1.04e-05 & 0.35 &  1.15e-02 & 0.52 & 2.23e-01 & 0.88 
& 2.15e-05 & 4.09 & 2.48e-01 & 0.01 						\\ \hline
12 &	0.5640 &	9.94e-06 & 0.44 &  7.62e-03 & 0.54 & 2.28e-01 & 0.92 
& 2.51e-05 & 4.46 & 2.44e-01 & 0.01 						\\ \hline
13 &	0.6078 &	3.71e-02 & 0.80 &  5.71e-03 & 0.68 & 2.25e-01 & 0.84 
& 3.35e-05 & 5.61 & 2.24e-01 & 0.02 						\\ \hline
14 &	0.6510 &	7.02e-03 & 0.82 &  1.03e-03 & 0.65 & 2.23e-01 & 0.88 
& 3.97e-05 & 6.41 & 2.19e-01 & 0.01 						\\ \hline
15 &	0.6937 &	4.96e-03 & 0.95 &  2.88e-03 & 0.92 & 2.18e-01 & 0.88 
& 4.82e-05 & 7.86 & 2.12e-01 & 0.02 						\\ \hline
\end{tabular}
\end{table}

\begin{table}
\caption{Numerical experiments on low rank matrix completion algorithms under rank estimation. 
True matrices are multivariate Gaussian with different covariance, $m = n = 100 $, and SR $=0.4$.} 
\label{table: cor small unknown r}
\centering
\tabcolsep=0.07cm
\begin{tabular}{cc|cc|cc|cc|cc|cc|cc}
\hline
\multicolumn{2}{c|}{Problem} & \multicolumn{2}{c|}{TS1-s1} 
& \multicolumn{2}{c|}{TS1-s2} & \multicolumn{2}{c|}{FPCA}  
& \multicolumn{2}{c|}{IRucL-q} & \multicolumn{2}{c}{LMaFit} \\ \hline
rank & cor & rel.err & time & rel.err & time & rel.err & time 
& rel.err & time & rel.err & time \\ \hline
5 &	0.5 &	5.49e-06 & 0.20 &  6.77e-02 & 0.86 & 1.61e-05 & 0.12 
& 7.50e-06 & 2.07 & 1.24e-01 & 0.01						\\ \hline
5 &	0.6 &	5.45e-06 & 0.20 &  7.74e-02 & 0.91 & 1.69e-05 & 0.11 
& 6.93e-06 & 1.76 & 9.12e-02 & 0.01 						\\ \hline
5 &	0.7 &	5.25e-06 & 0.25 &  1.04e-01 & 1.33 & 1.53e-05 & 0.12 
& 4.71e-04 & 2.06 & 6.60e-02 & 0.01						\\ \hline  \hline
10 &	0.5 &	1.10e-05 & 0.65 &  1.17e-01 & 1.14 & 1.21e-01 & 0.97 
& 1.76e-05 & 3.35 & 9.66e-02 & 0.01 						\\ \hline
10 &	0.6 &	1.61e-02 & 0.76 &  1.32e-01 & 1.04 & 1.02e-01 & 0.86 
& 2.72e-05 & 4.26 & 7.33e-02 & 0.01 						\\ \hline
10 &	0.7 &	9.14e-02 & 0.91 &  1.55e-01 & 0.93 & 9.11e-02 & 0.82 
& 7.12e-04 & 4.59 & 5.06e-02 & 0.01 						\\ \hline
\end{tabular}
\end{table}

\subsection{Image inpainting}

As in \cite{IRucLq,LMaFit}, we conducted grayscale image inpainting experiments 
to recover low rank images from partial observations, and compare with IRcuL-q 
and LMaFit algorithms.  
The `boat' image (see Figure \ref{figure: boat image}) is used to produce ground 
truth as in \cite{IRucLq} with rank equal to $40$ and at $512 \times 512$ resolution. 
Different levels of noisy disturbances are added to the original image $M_o$ by the formula
\begin{equation*}
M = M_o + \sigma \frac{\| M_o \|_F}{\| \varepsilon \|_F} \varepsilon,
\end{equation*}
where the matrix $\varepsilon$ is a standard Gaussian. 

\begin{figure}
\begin{tabular}{lll}
\begin{minipage}[t]{0.31 \linewidth}
\centering 
\includegraphics[scale=0.3]{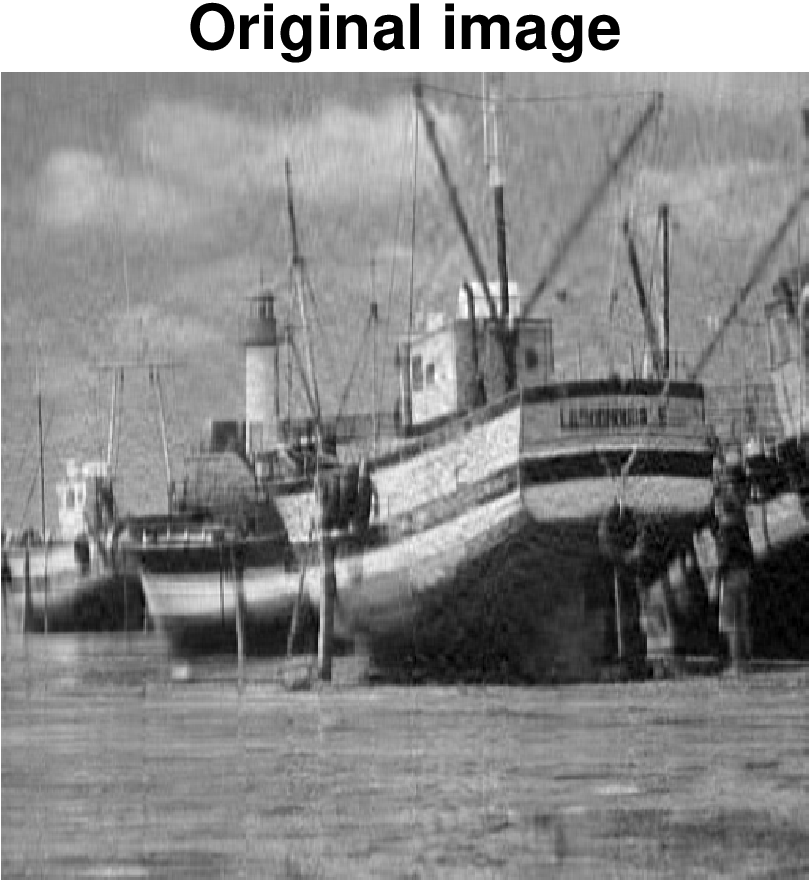}  
\end{minipage}  &
\begin{minipage}[t]{0.31 \linewidth}
\centering 
\includegraphics[scale=0.3]{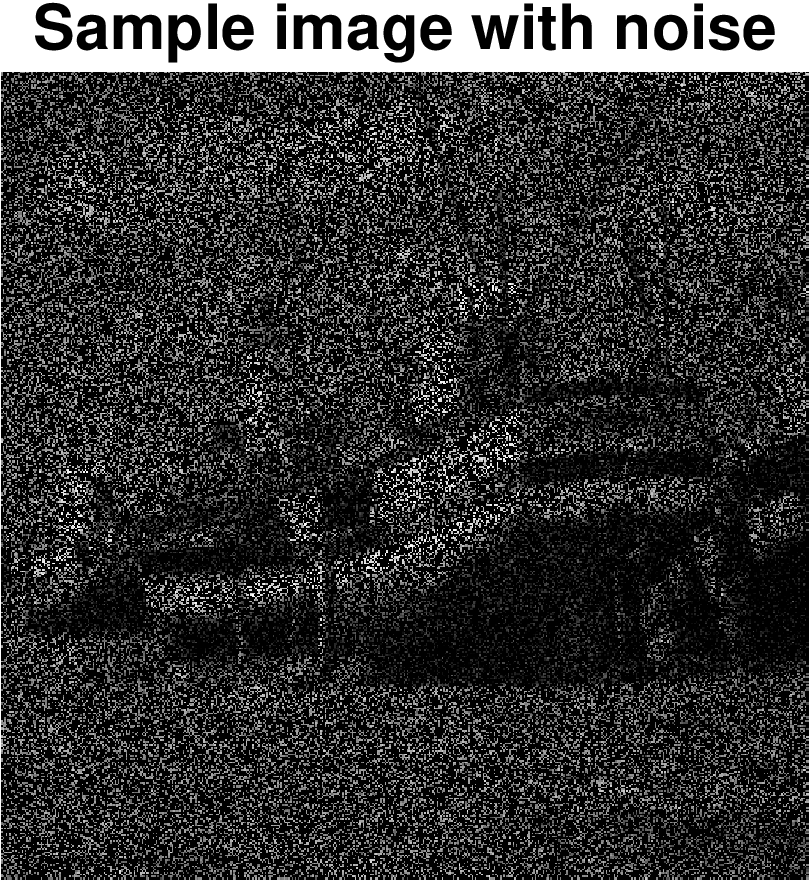}
\end{minipage} &
\begin{minipage}[t]{0.31 \linewidth}
\centering 
\includegraphics[scale=0.3]{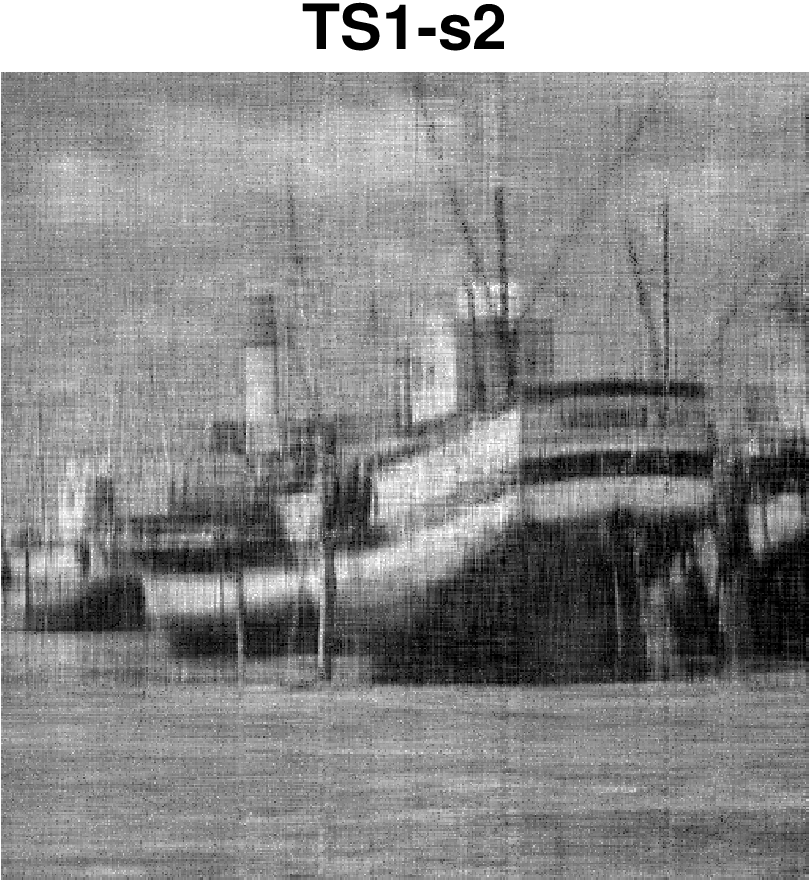}
\end{minipage} \\
\begin{minipage}[t]{0.31 \linewidth}
\centering 
\includegraphics[scale=0.3]{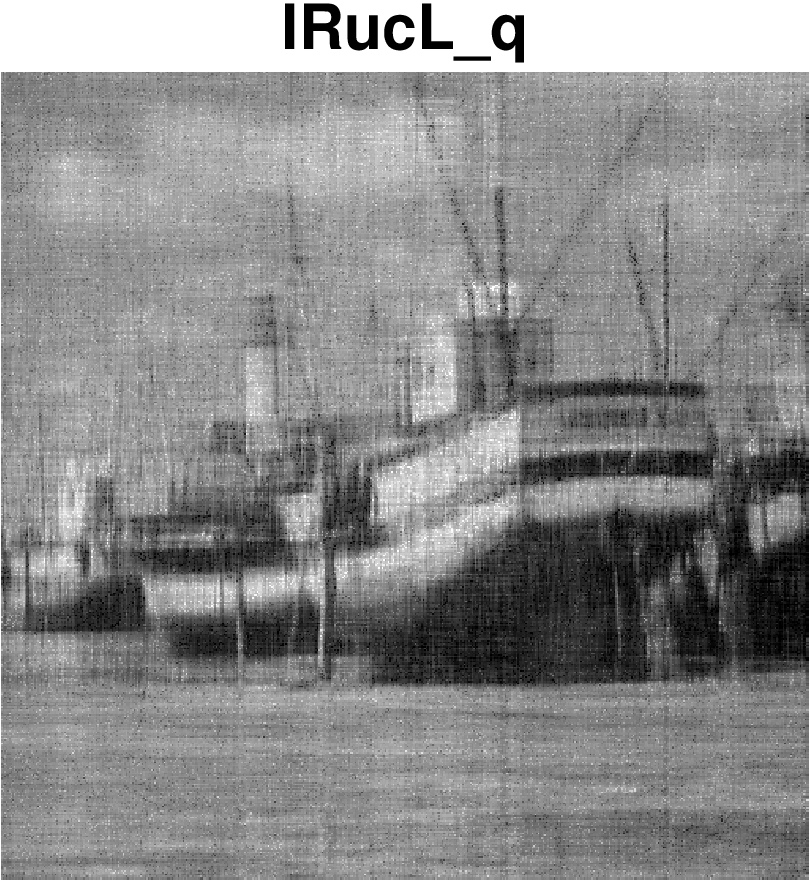} 
\end{minipage}  &
\begin{minipage}[t]{0.31 \linewidth}
\centering 
\includegraphics[scale=0.3]{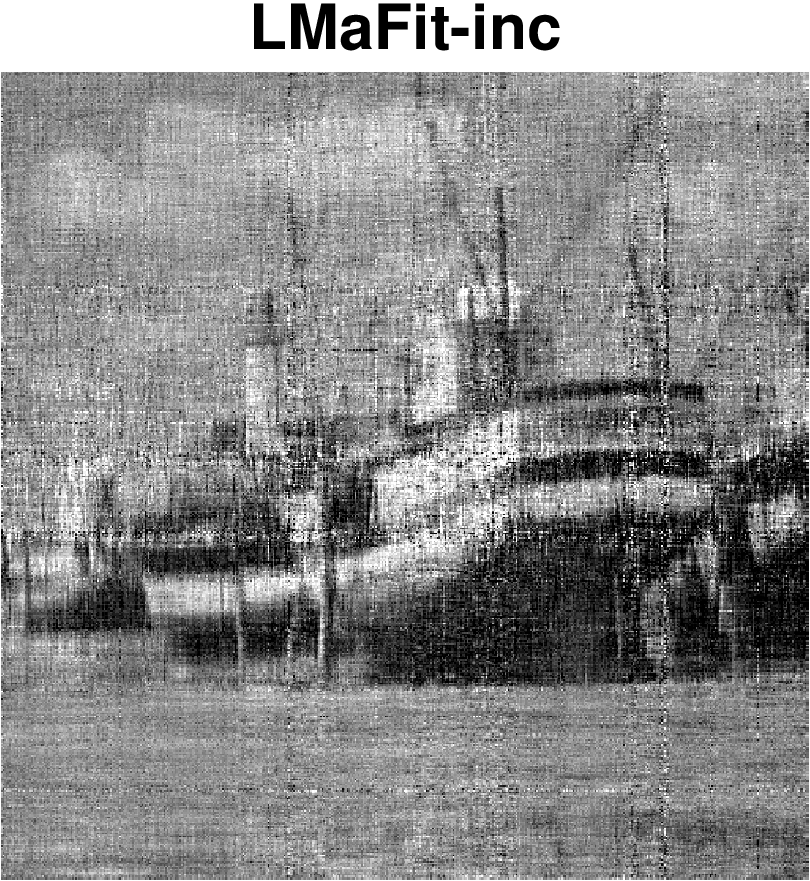}
\end{minipage} &
\begin{minipage}[t]{0.31 \linewidth}
\centering 
\includegraphics[scale=0.3]{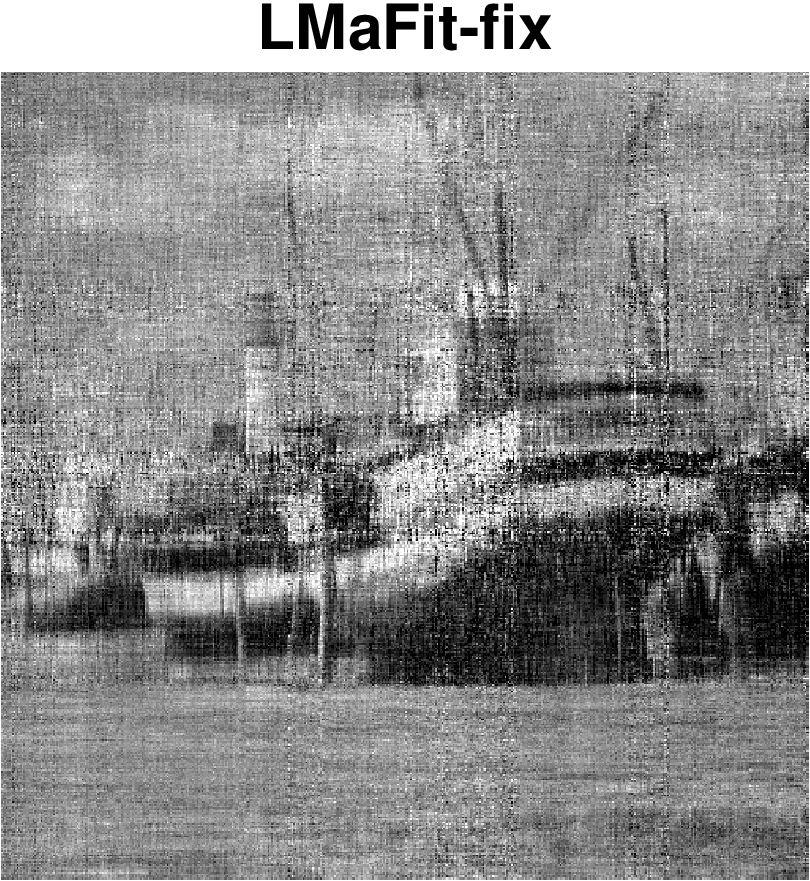}
\end{minipage}
\end{tabular}
\caption{
Image inpainting experiments with 
$\mathrm{SR} = 0.3, \sigma = 0.15$. }
\label{figure: boat image}
\end{figure}  

Here we only applied scheme TS1-s2. For IRucL-q, we followed the setting in \cite{IRucLq}  
by choosing $\alpha = 0.9$ and $\lambda = 10^{-2} \sigma$. 
Both fixed rank ( LMaFit-fix ) and increased rank (LMaFit-inc) schemes are implemented 
for LMaFit. We took fixed rank $r=40$ for TS1-s2, LMaFit-fix and IRucL-q. 

Computational results are in Table \ref{table: image} with sampling ratios varying among 
$\{ 0.3, 0.4, 0.5\}$ and noise strength $\sigma$ in $\{ 0.01, 0.05, 0.10, 0.15, 0.20, 0.25 \}$. 
The performance for each algorithm is measured in CPU time, PSNR (peak-signal noise ratio), and 
MSE (mean squared error). Here we focus more on PSNR values and placed the top $2$ 
in bold for each experiment. We observed that IRucL-q and TS1-s2 fared about the same. Either one is 
better than LMaFit in most cases. 

\begin{table}
\caption{Numerical experiments on boat image inpainting with algorithms TS1, IRcuL-q and LMaFit under 
different sampling ratio and noise levels.} 
\label{table: image}
\centering
\tabcolsep=0.03cm
\begin{tabular}{cc|ccc|ccc|ccc|lll}
\hline
\multicolumn{2}{c|}{Problem} & \multicolumn{3}{c|}{TS1-s2} & \multicolumn{3}{c|}{IRucL-q} 
& \multicolumn{3}{c|}{LMaFit-inc} & \multicolumn{3}{c|}{LMaFit-fix}  \\ \hline
SR & $\sigma$  & Time & PSNR & MSE  & Time & PSNR & MSE  & Time & PSNR & MSE  & Time & PSNR & MSE  \\ \hline
0.3 &	0.01 &	  27.23 & \textbf{44.21} & 3.79e-5&    85.97 & 43.28 & 4.70e-5&     5.70 & 32.80 & 5.25e-4&     2.17 & \textbf{45.02} & 3.15e-5 								\\
0.3 &	0.05 &	  27.81 & \textbf{30.55} & 8.82e-4&    58.25 & \textbf{29.55} & 1.11e-3&     6.00 & 29.10 & 1.23e-3&     2.81 & 29.28 & 1.18e-3 								\\
0.3 &	0.10 &	  29.21 & \textbf{24.89} & 3.24e-3&    24.26 & \textbf{24.99} & 3.17e-3&     5.59 & 19.74 & 1.06e-2&     5.74 & 18.52 & 1.41e-2 								\\
0.3 &	0.15 &	  26.37 & \textbf{22.57} & 5.54e-3&    27.61 & \textbf{22.74} & 5.33e-3&     5.46 & 16.64 & 2.17e-2&     4.84 & 15.98 & 2.52e-2 								\\
0.3 &	0.20 & 	  26.75 & \textbf{20.89} & 8.14e-3&    24.45 & \textbf{21.05} & 7.85e-3&     5.95 & 14.68 & 3.41e-2&     3.52 & 14.03 & 3.95e-2 								\\
0.3 &	0.25 & 	  26.92 & \textbf{19.60} & 1.10e-2&    23.75 & \textbf{19.75} & 1.06e-2&     5.52 & 12.91 & 5.12e-2&     1.85 & 12.73 & 5.33e-2 								\\		\hline

0.4 & 	0.01 &	   26.29 & 44.30 & 3.71e-5&    80.19 & 43.25 & 4.74e-5&     6.53 & \textbf{44.84} & 3.28e-5&     2.93 & \textbf{45.02} & 3.15e-5    \\								
0.4 & 	0.05 &	   26.05 & \textbf{30.58} & 8.75e-4&     63.20 & \textbf{29.39} & 1.15e-3&     4.62 & 29.09 & 1.23e-3&     3.12 & 27.91 & 1.62e-3 		\\						
0.4 & 	0.10 &	   26.08 & \textbf{24.74} & 3.35e-3&    32.58 & \textbf{24.86} & 3.27e-3&     6.44 & 19.97 & 1.01e-2&     8.00 & 19.19 & 1.21e-2 			\\					
0.4 & 	0.15 &	   26.34 & \textbf{22.57} & 5.53e-3&    26.30 & \textbf{22.72} & 5.35e-3&     5.52 & 16.78 & 2.10e-2&     2.86 & 16.21 & 2.40e-2 				\\				
0.4 & 	0.20 & 	   29.04 & \textbf{20.89} & 8.15e-3&    20.73 & \textbf{21.08} & 7.81e-3&     5.44 & 14.47 & 3.58e-2&     2.25 & 14.43 & 3.61e-2 					\\			
0.4 & 	0.25 & 	   28.84 & \textbf{19.56} & 1.11e-2&    20.48 & \textbf{19.68} & 1.08e-2&     5.70 & 12.79 & 5.26e-2&     2.35 & 12.57 & 5.54e-2 						\\			\hline

0.5 & 	0.01 &	   27.76 & \textbf{44.26} & 3.75e-5&    82.42 & 43.30 & 4.67e-5&     5.04 & 34.50 & 3.55e-4&     2.79 & \textbf{45.01} & 3.15e-5  \\								
0.5 & 	0.05 &	   27.89 & \textbf{30.54} & 8.82e-4&    64.19 & 29.47 & 1.13e-3&     5.81 & 28.63 & 1.37e-3&     2.79 & \textbf{29.62} & 1.09e-3 		\\						
0.5 & 	0.10 &	   29.56 & \textbf{24.80} & 3.31e-3&    30.50 & \textbf{24.94} & 3.21e-3&     5.78 & 19.92 & 1.02e-2&     3.54 & 19.09 & 1.23e-2 			\\					
0.5 & 	0.15 &	   26.21 & \textbf{22.59} & 5.51e-3&    24.24 & \textbf{22.74} & 5.32e-3&     5.71 & 16.73 & 2.12e-2&     2.67 & 16.32 & 2.33e-2 				\\				
0.5 & 	0.20 & 	   28.01 & \textbf{20.89} & 8.14e-3&    22.51 & \textbf{21.07} & 7.82e-3&     4.44 & 15.67 & 2.71e-2&     2.42 & 14.38 & 3.65e-2 					\\			
0.5 & 	0.25 & 	   29.86 & \textbf{19.52} & 1.12e-2&    18.32 & \textbf{19.71} & 1.07e-2&     5.54 & 12.62 & 5.48e-2&     3.24 & 12.74 & 5.32e-2 						\\   \hline		
\end{tabular}
\end{table}

\section{Conclusion}   \label{section: conclusion}
We presented the transformed Schatten-1 penalty (TS1), and derived the closed 
form thresholding representation formula for global minimizers of TS1 regularized rank 
minimization problem. We studied two adaptive iterative TS1 schemes (TS1-s1 and TS1-s2) 
computationally for matrix completion in comparison with several state-of-art methods, 
in particular \IRq . 
In case of low rank matrix recovery under known rank, TS1-s2 performs the best 
in accuracy and computational speed. In low rank matrix recovery under rank estimation, 
TS1-s1 is almost on par with IRucL-q except when both the matrix covariance and rank 
rise to certain level. In future work, we shall study rank estimation techniques to further 
improve on TS1-s1 and explore other applications for TS1 penalty. 

\section*{Acknowledgments}
The authors would like to thank Prof. Wotao Yin for his helpful suggestions on low rank matrix 
completion methods and numerical experiments. 

\appendix[Proof of Ky Fan k-norm inequality]
\begin{proof} 
Since $X = U \mathrm{Diag}(\sigma) V^T$, the $(j,k)$-th entry of matrix $X$ is 
$X_{j,k} = \sum \limits_{i = 1}^{m} \sigma_i U_{j,i} V_{k,i}$.

Thus, we have
\begin{equation} \label{equ: trk(X)}
\begin{array}{lll}
\mathrm{tr_k}(X) & = & \sum \limits_{j = 1}^{k} X_{j,j} 
           =  \sum \limits_{j = 1}^{k} \sum \limits_{i = 1}^{m} \sigma_i U_{j,i} V_{j,i}  \\
        & = & \sum \limits_{i = 1}^{m} \sum \limits_{j = 1}^{k} \sigma_i U_{j,i} V_{j,i} 
           =  \sum \limits_{i = 1}^{m} \sigma_i w_i^{(k)},        
\end{array}
\end{equation}
where the weight $w_i^{(k)}$ for the singular value $\sigma_i$ is defined as: 
\begin{equation} \label{func: weight w_i^k}
w_i^{(k)} = \sum \limits_{j = 1}^{k} U_{j,i} V_{j,i},  \ \ i = 1,2,...,m .
\end{equation}

Notice that, 
\begin{equation} \label{inequ: wik 1}
|w_i^{(k)}| \leq \sum \limits_{j = 1}^{k} |U_{j,i}| |V_{j,i}| \leq \|U(:,i)\|_2 \|V(:,i)\|_2 \leq 1,
\end{equation} 
where $U(:,i)$ and $V(:,i)$ are the $i$-th column vectors for $U$ and $V$.
Also for weights $\{w_i^{(k)}\}$,
\begin{equation} \label{inequ: wik sum k}
\begin{array}{lll}
\sum \limits_{i = 1}^{m} |w_i^{(k)}| 
	& \leq & \sum \limits_{i = 1}^{m} \sum \limits_{j = 1}^{k} |U_{j,i}||V_{j,i}|
       =  \sum \limits_{j = 1}^{k} \sum \limits_{i = 1}^{m} |U_{j,i}||V_{j,i}| \\  
    & \leq & \sum \limits_{j = 1}^{k} \ \|U(j,:)\|_2 \ \|V(j,:)\|_2 \leq  k,
\end{array}
\end{equation} 
where $U(j,:)$ and $V(j,:)$ are the $j$-th row vectors for $U$ and $V$, respectively.

All the $m$ weights are bounded by $1$, with absolute sum at most $k \leq m$. 
Note that $\sigma_i$'s are in decreasing order. 
By equation (\ref{equ: trk(X)}), we have, for all $k = 1,2,...,m$,
$$
\trk(X) \leq \sum \limits_{i = 1}^{m} \sigma_i |w_i^{(k)}|  \leq 
\sum \limits_{i = 1}^{k} \sigma_i = \trk(D) =\|X\|_{Fk}.
$$ 
 
Next, we prove the second part of the lemma --- equality condition, by mathematical induction. 
Suppose that for a given matrix $X$,
$\trk(X) = \trk(D)$, $\forall \ k = 1,...,m$. 
Here, it is convenient to define $X_i = \sigma_i U_i V_i^T$,
where $V_i$ ($U_i$) is the $i$-th column vector of $V$ ($U$).
Then matrix $X$ can be decomposed as the sum of $r$ rank-$1$ matrices, 
$X = \sum \limits_{i = 1}^{r} X_i$. 

When $k=1$, according to $tr_{1}(X) = tr_{1}(D)$ and the proof above, we know that
$$
w^{(1)}_1 = 1 \ \ \text{and}  \ w^{(1)}_i = 0 \ \ \text{for} \ i = 2, ... , m.
$$
By the definition of weights $w_i^{(k)}$ in (\ref{func: weight w_i^k}), 
we have $w_1^{(1)} = U_{1,1} V_{1,1} = 1$. 
Since $U$ and $V$ are both unitary matrices, we have:
$$
U_{1,1} = V_{1,1} = \pm 1; \ \ \ U_{1,j} = U_{j,1} = V_{1,j} = V_{j,1} = 0 \ \text{for} \ j \neq 1.
$$
Then vectors $U_1$ ($V_1$) is the first standard basis vector in space $\Re^m$ ($\Re^n$). 
The matrix $X_1 = \sigma_1 U_1 V_1^T$ is diagonal
$$ 
X_1 = \left[ \begin{array}{cccc}
\sigma_1 &    &        &   \\
         & 0  &        &   \\
         &    & \ddots &   \\
         &    &        & 0
         \end{array} \right]_{m \times n}
$$ 

For any index $i$, $1 \leq i \leq k-1$, suppose that 
\begin{equation} \label{equ: Ui,j Vi,j assum cond}
U_{i,i} = V_{i,i} = \pm 1; \ \ \ U_{i,j} = U_{j,i} = V_{i,j} = V_{j,i} = 0 \ \text{for any index} \ j \neq i.
\end{equation}
Then matrix $X_i = \sigma_i U_i V_i^T$, with $1 \leq i \leq k-1$, 
is diagonal and can be expressed as 
$$ 
X_i = \left[ \begin{array}{ccccccc}
 0       &         &          &                                                     \\
         & \ddots  &          &                                                     \\
         &         & 0        &                                                     \\
         &         &          &  \sigma_i   &                                       \\
         &         &          &             &  0                                    \\
         &         &          &             &           & \ddots      &              \\ 
         &         &          &             &           &             &  0          \\     
         \end{array} \right]_{m \times n} \longleftarrow (i \text{-th row})
$$ 

Under those conditions, let us consider the case with index $i = k$. Clearly, we have
$tr_{k}(X) = tr_{k}(D)$. Similarly as before, thanks to the formula (\ref{equ: trk(X)}) and 
inequalities (\ref{inequ: wik 1}) and (\ref{inequ: wik sum k}), it is true that
$$
w_i^{(k)} = 1 \ \text{for} \  i = 1, ... ,k; \ \ \ \text{and} \ 
w_i^{(k)} = 0 \ \text{for} \  i > k.
$$

Furthermore, by definition (\ref{func: weight w_i^k}), 
$w_k^{(k)} = \sum \limits_{j = 1}^{k} U_{j,k} V_{j,k} = U_{k,k} V_{k,k} = 1$.  This is 
because $U_{j,k} = V_{j,k} = 0$ for index $j < k$, by the 
assumption (\ref{equ: Ui,j Vi,j assum cond}) .
Thus vectors $U_k$ and $V_k$ are also standard basis vectors with the $k$-th entry 
to be $\pm 1$. Then
$$ 
X_k = \sigma_k U_k V_k^T = 
\left[ \begin{array}{ccccccc}
 0       &         &          &                                                     \\
         & \ddots  &          &                                                     \\
         &         & 0        &                                                     \\
         &         &          &  \sigma_k   &                                       \\
         &         &          &             &  0                                    \\
         &         &          &             &           & \ddots      &              \\ 
         &         &          &             &           &             &  0          \\     
\end{array} \right]_{m \times n}  \longleftarrow (k \text{-th row})
$$ 

Finally, we prove that all matrices $\{ X_i \}_{i = 1, \cdots ,r }$ are diagonal. 
So the original matrix $X = \sum \limits_{i = 1}^{r} X_i $ is equal to the diagonal matrix $D$.
The other direction is obvious. 
We finish the proof.
\end{proof}




          %


\end{document}